\definecolor{db}{rgb}{0.0, 0.2, 0.7}
\newtheorem{thm}{Theorem}[section]
\newtheorem{prop}[thm]{Proposition}
\newtheorem{defn}{Definition}[section]
\newtheorem{rem}{Remark}
\title{\LARGE \bf
Sequencing-enabled Hierarchical Cooperative CAV On-ramp Merging Control with Enhanced Stability and Feasibility
}
\author{Sixu Li, Yang Zhou, \textit{Member, IEEE}, Xinyue Ye, Jiwan Jiang and Meng Wang, \textit{Member, IEEE}
\thanks{\textit{(Corresponding author: Yang Zhou)}}
\thanks{Sixu Li and Xinyue Ye are with the Department of Multidisciplinary Engineering, Texas A$\&$M University, College Station, TX 77843, USA (email: sixuli@tamu.edu; xinyue.ye@tamu.edu).}%
\thanks{Yang Zhou is with the Zachry Department of Civil $\&$ Environmental Engineering, Texas A$\&$M University, College Station, TX 77843, USA (email: yangzhou295@tamu.edu).}%

\thanks{Jiwan Jiang is with the Department of Civil and Environmental Engineering, University of Wisconsin-Madison, Madison, WI 53706, USA (email: jiwan.jiang@wisc.edu).}%
\thanks{Meng Wang is with the Chair of Traffic Process Automation, Technische Universität Dresden, 01062 Dresden, Germany (email: meng.wang@tu-dresden.de).}%
}
\begin{document}

\maketitle
\thispagestyle{empty}
\pagestyle{empty}

\begin{abstract}

This paper develops a sequencing-enabled hierarchical connected automated vehicle (CAV) cooperative on-ramp merging control framework. The proposed framework consists of a two-layer design: the upper-level control sequences the vehicles to harmonize the traffic density across mainline and on-ramp segments, simultaneously enhancing lower-level control efficiency through a mixed-integer linear programming formulation. Subsequently, the lower-level control, in turn, employs a longitudinal distributed model predictive control (MPC) supplemented by a virtual car-following (CF) concept to ensure three key aspects: asymptotic local stability, $l_2$ norm string stability, and safety. Proofs of asymptotic local stability and $l_2$ norm string stability are mathematically derived. Compared to other prevalent asymptotic local-stable MPC controllers, the proposed distributed MPC controller greatly expands the initial feasible set. Additionally, an auxiliary lateral control is developed to maintain lane-keeping and merging smoothness while accommodating ramp geometric curvature. To validate the proposed framework, multiple numerical experiments are conducted. Results indicate a notable outperformance of our upper-level controller against a distance-based sequencing method. Furthermore, the lower-level control effectively ensures smooth acceleration, safe merging with adequate spacing, adherence to proven longitudinal local and string stability, and rapid regulation of lateral deviations. 

\end{abstract}

\begin{IEEEkeywords}
Hierarchical CAV merging control, model predictive control, enhanced stability and feasibility, mixed-integer programming.
\end{IEEEkeywords}

\section{INTRODUCTION}

\IEEEPARstart{O}{n-ramp} merging, identified as a cause of traffic void and a trigger of traffic disturbances\cite{ahn2007freeway, ahn2010}, has become a focal point in the realm of transportation science and engineering, primarily due to its adverse effects on traffic efficiency, safety, and energy efficiency \cite{ahn2010}. Conventional approaches like ramp metering, which uses traffic signals at the on-ramp to control the vehicle inflow onto freeways; and variable speed limits, which uses dynamic signage to adjust the speed limit on a roadway in response to changing traffic and weather conditions, have traditionally concentrated on regulating traffic flow at a macroscopic level \cite{lu2014review,papageorgiou2002freeway}. However, these methods often overlook the intricate aspects of individual vehicle-level operations, hindering the improvement in safety, efficiency, vehicle-level local stability, and system-level string stability.

The advent of recent connected automated vehicle (CAV) control technologies, facilitated by advancements in vehicle automation and vehicle-to-vehicle (V2V) as well as vehicle-to-infrastructure (V2I) communication, has presented remarkable prospects for revolutionizing transportation systems \cite{wang2015cooperative}. As a logical extension of fundamental driving functions such as CAV car-following (CF) and lateral control, CAV on-ramp merging control possesses the potential to enhance overall transportation efficiency and safety \cite{li2023beyond,li2024enhancing}.

The existing body of research on CAV-enabled on-ramp merging control predominantly concentrates on lower-level longitudinal control, which involves generating the appropriate longitudinal control input to facilitate safe and efficient merging. Various control strategies have been employed in this regard, including fuzzy control \cite{milanes2010automated}, linear feedback and feedforward control \cite{chen2021connected}, and model predictive control (MPC) \cite{cao2015cooperative}. Nevertheless, these approaches largely ignore the system-level string stability, which is a vital property to be against the disturbances (e.g., leading vehicles acceleration) \cite{zhou2019distributed}. Exceptions can be found in one study which explored the string stability of a virtual platoon controlled by a linear feedback and feedforward controller, and furnished mathematical proof through frequency domain analysis \cite{chen2021connected}. Whereas, neglecting the physical constraints such as acceleration/deceleration boundaries as well safety constraints greatly hinders its real-world application.

 Among the various control strategies, MPC has gained significant attention in recent years. This optimization-based control approach capably integrates vehicle models into constraints, respects vehicles' physical limitations, and optimizes multiple performance criteria \cite{borrelli2017predictive}. Additionally, the short horizon characteristic of MPC keeps the computation time short, making it well-suited for a real-time vehicle control strategy for on-ramp merging. Although to the best of the author's knowledge, local and string stability have rarely been proved for MPC-based lower-level longitudinal on-ramp merging controllers, MPC-based CF control approaches hold potential for extension to on-ramp merging lower-level control. This is supported by the extensive research in local stability \cite{zhou2019distributed, li2020distributed,dunbar2011distributed} and string stability \cite{zhou2019distributed, dunbar2011distributed} for MPC-based CF controllers. However, to guarantee local stability, these approaches typically constrain all the terminal states of a CAV (e.g., deviation from equilibrium spacing, speed difference, and/or acceleration at the end of a prediction horizon) to either zero \cite{dunbar2011distributed}, lie within an invariant set \cite{zhou2019distributed}, or match the average of the CAV's neighbors' assumed terminal states with an offset \cite{li2020distributed}. Such terminal state constraints lead to a highly restricted initial feasible set (i.e., the set of all the initial states that make the optimization problem feasible) \cite{borrelli2017predictive}, and these studies typically assume initial feasibility. This assumption, however, falls short in on-ramp merging scenarios where both mainline and on-ramp CAVs must be considered, and CAVs often do not begin near the equilibrium spacing. Consequently, initial feasibility cannot be assumed for these approaches and can only be achieved with longer prediction horizons\cite{li2020distributed,zhou2019distributed}, posing significant challenges to the direct online application of these MPC-based CF control strategies in on-ramp merging scenarios.

It is also important to highlight that studies concentrating solely on lower-level control often largely ignore the potential sub-optimal system-level performances resulting from the merging sequence. This oversight has prompted recent studies to develop hierarchical control approaches \cite{ding2019rule,hu2021embedding,chen2020hierarchical}, which use an upper-level controller to determine vehicle merging sequences and a lower-level controller to generate longitudinal control inputs. Main approaches used for the upper-level controller include rule-based control \cite{ding2019rule,hu2021embedding} and optimization-based control \cite{chen2020hierarchical}. However,those methods generally fail to consider the interactions with the lower-level controller and base merging sequences on estimated or minimal arrival times of the vehicles\cite{ding2019rule,hu2021embedding}. Some exemptions can be found, for example \cite{chen2020hierarchical}, which  attempts to address this issue by incorporating a second-order car-following model in the optimization-based upper-level controller with a
cooperative merging mode. Nevertheless, the upper-level control over-simplifies the lower level control policy and renders potential suboptimality. Another approach is to design a holistic framework, which generates the merging sequence and trajectories simultaneously \cite{rios2016automated,mu2021event}. Although some efforts have been undertaken to reduce the computational time \cite{mu2021event}, these approaches cannot fully fill the real-time control computation requirement by the complexity of formulation (e.g., as nonlinear mixed-integer programming) and problem scales. Additionally, these approaches tend to overlook vehicular lateral dynamics.

This paper aims to address the identified gaps and extend to real-world two-dimensional scenarios. Specifically, this paper develops a sequencing-enabled hierarchical CAV cooperative on-ramp merging control to ensure system-level optimality while considering two-dimensional vehicle dynamics. Moreover, this framework ensures both local and string stability in the lower-level longitudinal controller, and improves initial feasibility. The upper-level controller uses mixed-integer linear programming for vehicle sequencing, optimizing traffic density and vehicle movement. This optimal sequence is relayed to each CAV to create a virtual CF platoon. Managed by a distributed, MPC-based longitudinal controller and a linearized MPC-based lateral controller for lane-keeping, the system maintains system-level stability, safety, and smoothness.

The remainder of this paper is structured as follows. Section \ref{sec2} offers a brief introduction to the on-ramp merging scenario. Sections \ref{sec3} presents the upper-level controller. The development of the lower-level longitudinal controller, accompanied by mathematical proofs of local and $l_2$ norm string stability as well as an analysis of initial feasibility, is elucidated in Section \ref{sec4}. Section \ref{sec5} introduces the lower-level lateral controller. Section \ref{sec6} showcases the results and discussion of the simulation experiments. Finally, conclusions are drawn in Section \ref{sec7}.

\section{PROBLEM FORMULATION\label{sec2}}
This section defines a cooperative freeway on-ramp merging scenario as illustrated in Fig. \ref{fig:scenario}. In this scenario, a pure CAV environment is assumed. A coordinator placed near the merging point can communicate with all CAVs within the merging control area. Based on the positions and speeds of each vehicle, the coordinator is capable of computing the merging sequence and relaying the information to each vehicle via V2I. As depicted by the curved dashed arrows in the figure, the computed sequence enables the mapping of both mainline and on-ramp CAVs onto the same virtual Z-axis. the CAV merging control problem can be treated as a virtual CF problem via V2V as suggested by \cite{chen2021connected}, with an auxiliary lateral control. The unidirectional V2V is depicted by the solid straight arrows. Note that, the sequencing only needs to be conducted once a new CAV enters the control area, whereas the virtual CF control can be computed continuously over time, which helps to ensure the computation efficiency.

\begin{figure}[h]
    \centering
    \setlength{\abovecaptionskip}{0pt}
    \includegraphics[width=0.5\textwidth]{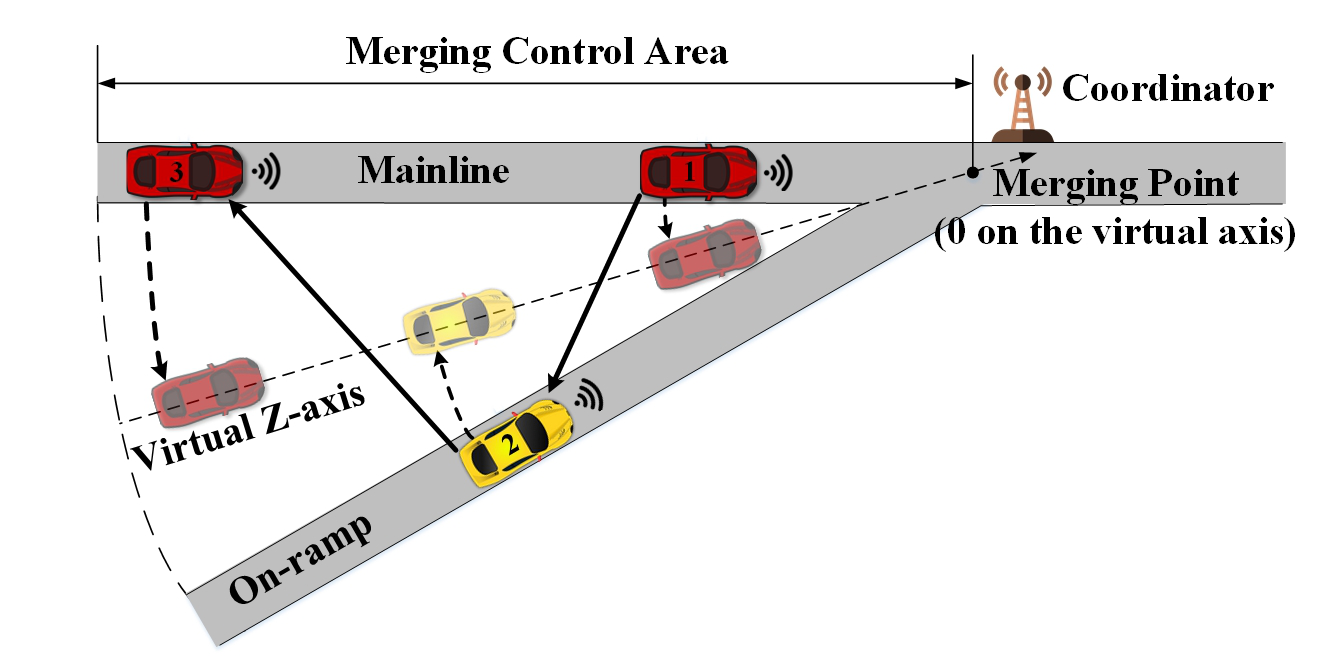}
    \caption{On-ramp merging scenario and communication}
    \label{fig:scenario}
\end{figure}

\begin{rem}
    While this paper focuses on scenarios with one lane on both the mainline and the ramp, our framework easily extends to scenarios with multiple lanes on the ramp and one lane on the mainline. In such scenarios, conflicts occur between all CAVs, hence they must be all mapped onto the virtual axis for effective virtual CF. Fig. \ref{fig:two lane scenario} demonstrates an example of the mapping with 2 lanes on the ramp, wherein MP1 and MP2 represent the merging points of lane 1 and lane 2, respectively. Scenarios with more lanes on the ramp follow similarly. As for scenarios with multiple lanes on the mainline, while our framework can be adapted by focusing solely on the right-most mainline lane, a more efficient method would involve allowing mainline CAVs to change lanes, thereby accommodating on-ramp CAVs. However, lane changing falls outside the scope of this paper, which may limit the efficiency of the proposed framework in such configurations.
\end{rem}
\begin{figure}[h]
    \centering
    \setlength{\abovecaptionskip}{0pt}
    \includegraphics[width=0.47\textwidth]{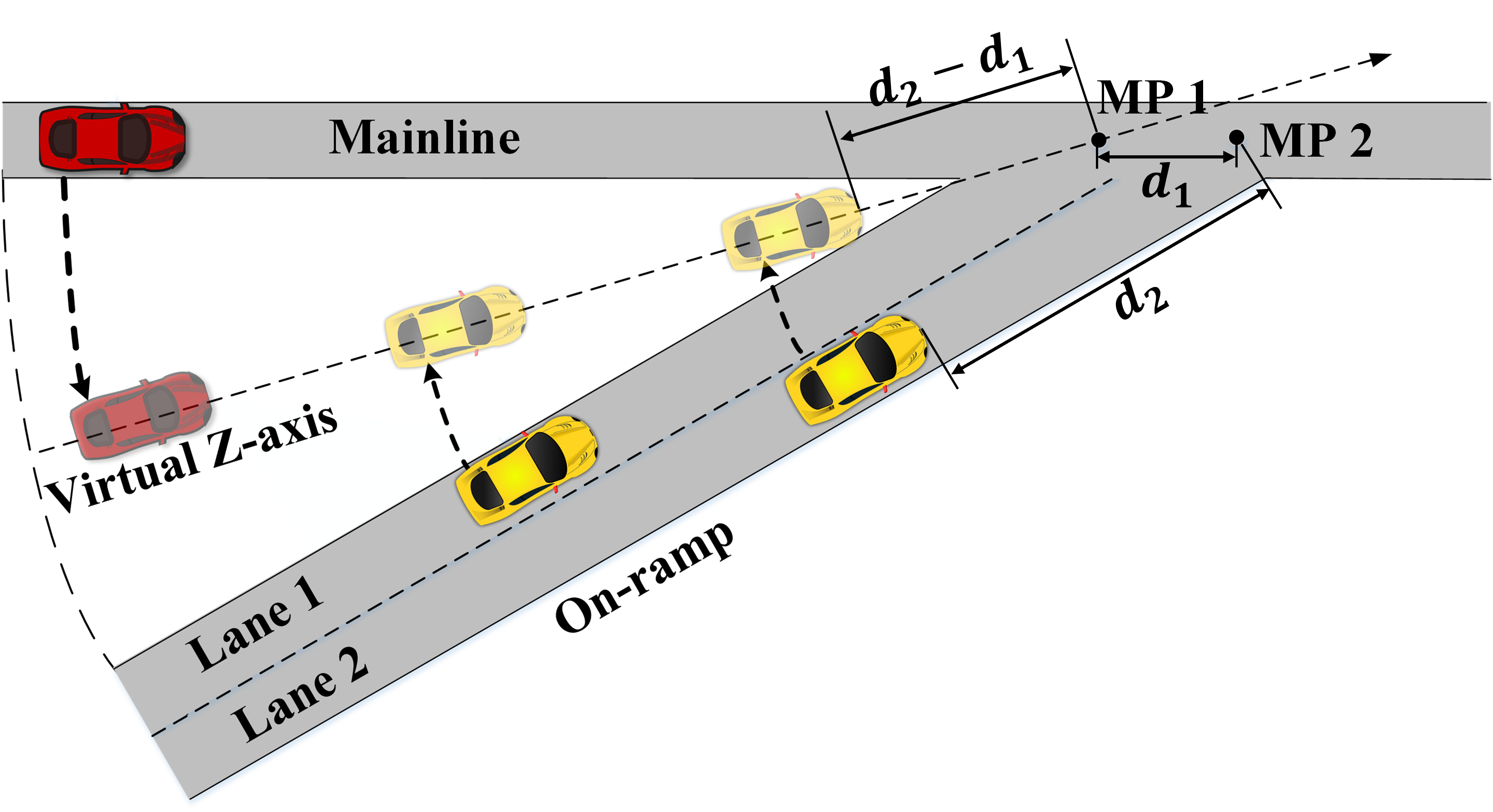}
    \caption{Virtual axis mapping in 2 ramp lanes scenario}
    \label{fig:two lane scenario}
\end{figure}

\section{UPPER-LEVEL CONTROLLER\label{sec3}}

 To determine a merging sequence that improves the efficiency and cost-effectiveness of the lower-level longitudinal controller while balancing traffic density on mainline and on-ramp, we designed an upper-level controller that optimizes multi-scale criteria related to both vehicle and traffic operation. The controller is formulated as a mixed-integer linear programming problem, enabling the realization of real-time solutions.  Subsections \ref{sec3.1} to \ref{sec3.3} introduce the decision matrix and information vectors, constraints, and cost function, respectively.

The position and velocity of CAVs are calculated on the virtual Z-axis as shown in Fig. \ref{fig:scenario}, with the origin placed at the merging point.
\subsection{Decision Matrix and Information Vectors\label{sec3.1}} 

We first introduce the decision matrix $U$ which is the output of the controller: 
\begin{equation}
    U=\begin{bmatrix}
        u_{1,1} & \cdots & u_{1,n} \\
        \vdots & \ddots & \vdots \\
        u_{m,1} & \cdots & u_{m,n} \\
        \vdots & \ddots & \vdots \\
        u_{n,1} & \cdots & u_{n,n}
    \end{bmatrix}
    \begin{matrix}
        \left.\begin{matrix}
            \\ \\ \\ 
        \end{matrix}\right\} m\text{ vehicles on mainline} \\
        \\
        \left.\begin{matrix}
            \\ \\
        \end{matrix}\right\} r\text{ vehicles on ramp~{}~{}~{}~{}~{}}
    \end{matrix}
    \label{eq:matrix}
\end{equation}

The matrix $U$ is a binary matrix, where the $n$ rows represent the $n$ CAVs on real roads, the rows are divided into two groups. The first group consists of the first $m$ rows, representing the $m$ CAVs on the mainline. The second group consists of rows $m+1$ to $n=m+r$, representing the $r$ CAVs on the ramp. Both groups are sorted by the CAV's distance from the merging point. The $n$ columns represent the $n$ virtual CF sequence IDs that will be assigned by the controller. The element $u_{i,j}$ takes on a value of 1 if the $i$th CAV on real roads is assigned with ID $j$ in the virtual CF sequence, and 0 otherwise. We define two constant information vectors that provide information about each CAV: the position vector ${P}=[{p}_1, {p}_2, \dots,  {p}_n]$, where $p_{i}$ is CAV $i$'s position on the virtual Z-axis, and the velocity vector ${V}=[{v}_1, {v}_2, \dots, {v}_n]$, with $v_{i}$ denoting CAV$i$'s longitudinal velocity. Both vectors are sorted as the rows of $U$. The coordinator gathers the position and velocity information of each CAV in the merging control area, enabling the controller to optimize the merging sequence based on the current conditions of the CAVs involved.

\subsection{Constraints\label{sec3.2} }
The constraints are divided into groups based on their meanings and purposes. First are the physical constraints:
\begin{equation}
\sum_{i=1}^{n} u_{i,j}=1,  ~{}j\in \{1,2,\ldots,n\} 
\label{eq:physic1}
\end{equation}\\[-3ex]
\begin{equation}
\sum_{j=1}^{n} u_{i,j}=1, ~{}i\in \{1,2,\ldots,n\} \label{eq:physic2}
\end{equation}\\[-3ex]
\begin{equation}
\setlength{\jot}{-6pt}
\begin{split}
u_{i,j} \leq 1-\sum_{l=j+1}^{n}u_{i-1,l}, ~{} i\in \{2,3,\ldots,m\}\\
,~{} j\in \{1,2,\ldots,n\} 
\label{eq:physic3}
\end{split}
\end{equation}\\[-3ex]
\begin{equation}
\setlength{\jot}{-6pt}
\begin{split}
u_{i,j}\leq 1-\sum_{l=j+1}^{n}u_{i-1,l}, 
~{} i\in \{m+2,m+3,\ldots,n\} \\
,~{} j\in \{1,2,\ldots,n\}
\label{eq:physic4}
\end{split}
\end{equation}

\noindent where (\ref{eq:physic1}) ensures that each virtual CF sequence ID can only be assigned to one CAV, while equation (\ref{eq:physic2}) ensures that each CAV can only be assigned with one virtual CF sequence ID. (\ref{eq:physic3}) restricts each CAV on the mainline not to exceed its predecessor on the same road, while (\ref{eq:physic4}) imposes the same restrictions for on-ramp CAVs, to prevent collision.

Secondly, we introduce constraints that restrict the value of continuous variable $\Delta d_{j}$ to be equal to the absolute value of CAV ($j+1$)'s deviation from the desired spacing, which will be minimized in the cost function:
\begin{equation}
\begin{split}
\Delta d_j \geq -\left(u_{:,j}^T P^T - u_{:,j+1}^T P^T - d_{j+1}^*\right)\\
,~{} j\in \{1,2,\ldots,n-1\}
\label{eq:d1}
\end{split}
\end{equation}
\begin{equation}
\Delta d_j \geq u_{:,j}^T P^T - u_{:,j+1}^T P^T - d_{j+1}^*,~{} j\in \{1,2,\ldots,n-1\} \label{eq:d2}
\end{equation}

\noindent where $d_{j+1}^*$ represents the constant desired spacing from the predecessor of the ($j+1$)th CAV in the virtual CF sequence. $u_{:,j}$ represents the $j$th column of matrix $U$, the term ($u_{:,j}^T P^T - u_{:,j+1}^T P^T - d_{j+1}^*$) represents the ($j+1$)th CAV's deviation from the desired spacing. To ensure that the continuous variable $\Delta d_{j}$ is greater than or equal to the absolute value of the spacing deviation of CAV $j+1$, we use (\ref{eq:d1}) and (\ref{eq:d2}). By further adding $\Delta d_{j}$ into the cost function linearly, as will be shown in subsection \ref{sec3.3}, $\Delta d_{j}$ is minimized and thus exactly equal to the absolute value of the spacing deviation of CAV $j+1$ \cite{zhang2016model}.

To account for the movement interactions between each CAV and its predecessor, we need to determine whether the spacing deviation is increasing or decreasing. We achieve this by obtaining two integer variables, $s_{d,j}$ and $s_{v,j}$, which represent the sign of the spacing deviation of CAV $j+1$ and the sign of the speed difference between CAV $j$ and $j+1$, respectively. Both variables are within $\left\{-1,1\right\}$. These variables are used to derive an indicator of the change in spacing deviation.

The constraints used to obtain $s_{d,j}$ are as follows:
\begin{equation}
u_{:,j}^T P^T - u_{:,j+1}^T P^T - d_{j+1}^* \leq M_1 y_{1,j},~{} j\in \{1,2,\ldots,n-1\} \label{eq:d4}
\end{equation}
\begin{equation}
\begin{split}
-(u_{:,j}^T P^T - u_{:,j+1}^T P^T - d_{j+1}^*) \leq M_2 y_{2,j}\\
,~{} j\in \{1,2,\ldots,n-1\} \label{eq:d5}
\end{split}
\end{equation}
\begin{equation}
y_{1,j} + y_{2,j} = 1,~{} j\in \{1,2,\ldots,n-1\} \label{eq:d6}
\end{equation}
\begin{equation}
s_{d,j} = y_{1,j} - y_{2,j},~{} j\in \{1,2,\ldots,n-1\} \label{eq:d7}
\end{equation}

\noindent The binary variables $y_{1,j}$ and $y_{2,j}$ are introduced with large constant numbers $M_{1}$ and $M_{2}$, respectively. The constraint (\ref{eq:d6}) requires that one of $y_{1,j}$ and $y_{2,j}$ is 0, and the other is 1. When the spacing deviation of CAV $j+1$ is negative, (\ref{eq:d4})-(\ref{eq:d6}) set $y_{1,j}=0$ and $y_{2,j}=1$.  Conversely, when the spacing deviation is positive, they set $y_{1,j}=1$ and $y_{2,j}=0$. Finally, the integer variable $s_{d,j}$ represents the sign of the spacing deviation of CAV $j+1$, with a value of -1 or 1, as a result of (\ref{eq:d7}).

The constraints used to obtain $s_{v,j}$ are similar:
\begin{equation}
u_{:,{j+1}}^T V^T - u_{:,{j}}^T V^T \leq M_3 y_{3,j}, ~{}j\in \{1,2,\ldots,n-1\} \label{eq:pos1}
\end{equation}
\begin{equation}
-(u_{:,{j+1}}^T V^T - u_{:,{j}}^T V^T) \leq M_4 y_{4,j}, ~{}j\in \{1,2,\ldots,n-1\} \label{eq:pos2}
\end{equation}
\begin{equation}
y_{3,j} + y_{4,j} = 1, ~{}j\in \{1,2,\ldots,n-1\} \label{eq:pos3}
\end{equation}
\begin{equation}
s_{v,j} = y_{3,j} - y_{4,j}, ~{}j\in \{1,2,\ldots,n-1\} \label{eq:pos4}
\end{equation}

\noindent where ($u_{:,{j+1}}^T V^T - u_{:,{j}}^T V^T$) is the speed difference between CAV $j$ and $j+1$. The approach used to obtain the binary variable $s_{v,j}\in\{-1,1\}$ follows the same technique and logic as in (\ref{eq:d4})-(\ref{eq:d7}),  so that $s_{v,j}$ eventually equals the sign of the speed difference between CAV $j$ and $j+1$.

Using $s_{d,j}$ and $s_{v,j}$, we can obtain integer variables $f_j$ that indicate whether the assigned sequences will reduce or increase the spacing deviations as (\ref{eq:f1}) and (\ref{eq:f2}):
\begin{equation} \label{eq:f1}
f_j \geq s_{d,j} - s_{v,j},~{} j\in \{1,2,\ldots,n-1\}
\end{equation}
\begin{equation} \label{eq:f2}
f_j \geq - (s_{d,j} - s_{v,j}),~{} j\in \{1,2,\ldots,n-1\}
\end{equation}

As can be found that if $s_{d,j} = s_{v,j}$, $f_{j}\geq 0$ and if $s_{d,j} \neq s_{v,j}$, $f_{j}\geq 2$. Minimizing $f_{j}$ helps to reduce the spacing deviation.

\subsection{Cost Function\label{sec3.3}}

As explained in subsection \ref{sec3.2}, we incorporate $\Delta d_{j}$ and $f_{j}$ in the cost function to transform inequality constraints into equality constraints. In fact, minimizing these variables is consistent with our microscopic vehicle operation criteria. Adding $\Delta d_{j}$ in the cost function minimizes the absolute value of spacing deviations. Similarly, adding $f_{j}$ in the cost function avoids forming predecessor-follower pairs with increasing spacing deviation. Additionally, we consider macroscopic traffic operation criteria by accounting for traffic density and prioritizing the road with higher traffic density for merging. The multi-scale cost function is formulated as:
\begin{equation} \label{eq:upper cost_function}
J^*(P,V,U) = \min_{u_{1,1} \rightarrow u_{n,n}} \underbrace{{ \sum_{j=1}^{n-1} (Q_{u} d_{j}+ R_{u}f_{j}) }}_{\text{Microscopic}}+\underbrace{{ \sum_{i=1}^{n} u_{i,:} S_{:,i} }}_{\text{Macroscopic}}
\end{equation}

\noindent where the first term improves microscopic vehicle movement by minimizing the absolute values of spacing deviation and the number of predecessor-follower pairs with increasing spacing deviation in the virtual CF sequence, $Q_u$ and $R_u$ are constant weights. The second term enhances macroscopic traffic operation by penalizing the CAVs on the lower-density road for merging first, where $u_{i,:}$ is the $i$th row of matrix $U$. To this end, we define a weight matrix $S$ that is designed before the mixed-integer programming is conducted and depends on the number of CAVs on both the mainline and ramp. Specifically, if CAV$i$ is on the road with lower traffic density, the $i$th column of $S$, represented by $S_{:,i}$, will be assigned a decreasing positive vector $\begin{bmatrix} 0.5^0,0.5^1 \cdots,0.5^{n-1} \end{bmatrix}^T$ to penalize the CAV for merging first. Otherwise, $S_{:,i}$ will be assigned with a zero vector.

\section{LOWER-LEVEL LONGITUDINAL CONTROLLER\label{sec4}}

To ensure smooth, stable, and safe merging while considering the physical limitations of vehicles, we developed a lower-level longitudinal controller using distributed MPC. The MPC controller minimizes the deviation from desired spacing, speed difference from the predecessor, and acceleration of each CAV. We underscore the  advantages and improvements of the proposed controller by mathematically proving its asymptotic local stability and $l_2$ norm string stability conditions, and analyzing its initial feasibility. Subsection \ref{sec4.1} introduces the platoon system dynamics, followed by subsection \ref{sec4.2}, which outlines the formulation of the optimization problem for the MPC. Proofs of asymptotic local stability and $l_2$ norm string stability are presented in subsections \ref{sec4.3} and \ref{sec4.4} respectively, while subsection \ref{sec4.5} provides an analysis on initial feasibility.

\subsection{Platoon System Dynamics \label{sec4.1}}

In this controller, we employ the constant distance policy \cite{dunbar2011distributed,li2020distributed}, which sets the desired spacing $d_i^*$ of CAV $i$ in the assigned virtual CF sequence as a predetermined constant value. Using this policy, the continuous platoon system dynamics can be expressed as follows:
\begin{equation} \label{eq:cd1}
\dot{\Delta d_i}(t)=\Delta v_i(t)
\end{equation}
\begin{equation} \label{eq:cd2}
\dot{\Delta v_i}(t)=a_{i-1}(t)-a_i(t)
\end{equation}
\begin{equation} \label{eq:cd3}
\dot{a_i}(t) = g_i(T_{des,i}(t), T_i(t), v_i(t))
\end{equation}

\noindent where $\Delta d_i(t) = d_i(t)-d_i^*(t)$ is the spacing deviation of CAV $i$ at time $t$, $d_i(t)$ is the current spacing of CAV $i$ at time $t$.
$\Delta v_i(t) = v_{i-1}(t)-v_{i}(t)$ is the speed difference between CAV $i-1$ and CAV $i$ at time $t$, $v_{i-1}(t)$ and $v_{i}(t)$ are the velocity of CAV $i-1$ and CAV $i$ at time $t$, respectively. $a_{i-1}(t)$ and $a_i(t)$ represent the acceleration of CAV $i-1$ and CAV $i$ at time $t$, respectively. $T_i$ and $T_{des,i}$ are the actual and desired driving/braking torques, respectively; $g_i(\cdot)$ is the vehicle driveline dynamics  \cite{xu2021energy,li2023sequencing}, given as:
\begin{equation} \label{eq:cd4}
\begin{split}
g_i(T_{des,i}(t), T_i(t), v_i(t))= \frac{\eta_i}{m_i r_i \tau_i} (T_{des,i}(t) - T_i(t)) \\- f_{roll,i}m_ig - \frac{1}{2} \rho C_d A_{v,i}v_i^2(t)
\end{split}
\end{equation}

\noindent where $\eta_i$ is the mechanical efficiency of the driveline, $m_i$ is the vehicle mass, $r_i$ is the tire radius, $\tau_i$ is the time lag of the driveline, $f_{roll,i}$ is the rolling resistance, $g$ is the gravitational acceleration, $\rho$ is the ambient air density, $C_d$ is the air drag coefficient, and $A_{v,i}$ is the vehicle's front projection area.

With the continuous state $x_i(t)= \begin{bmatrix} \Delta d_i(t), \Delta v_i(t), a_i(t) \end{bmatrix}^T$ defined, we can use the Euler discretization technique to obtain the discrete platoon system dynamics model as follows:
\begin{equation} \label{eq:ddm}
x_{i,k+1} =  Ax_{i,k} +  B\gamma_{i,k} + Da_{i-1,k} = f(x_{i,k},\gamma_{i,k},a_{i-1,k})
\end{equation}

\noindent where $A = \begin{bmatrix} 1 & T_s & 0 \\
0 & 1 & -T_s \\
0 & 0 & 1 \end{bmatrix}$, $B=\begin{bmatrix} 0 \\ 0 \\ T_s \end{bmatrix}$, $D=\begin{bmatrix} 0 \\ T_s \\ 0 \end{bmatrix} $. $T_s$ is the time interval between each time step.
$x_{i,k}=\begin{bmatrix} \Delta d_{i,k}, \Delta v_{i,k}, a_{i,k} \end{bmatrix}^T$ is the discrete state of CAV $i$ at time step $k$. $\gamma_{i,k}$ is the control input of CAV $i$ at time step $k$, defined as $\gamma_{i,k}=g_i(T_{des,i,k}, T_{i,k}, v_{i,k})$. The desired driving torque $T_{des,i}$ can be calculated by $T_{des,i}=g_i^{-1}(T_{i,k}, v_{i,k},\gamma_{i,k})$ and sent to the actuator.

\subsection{Optimization Formulation \label{sec4.2}}

For the lower-level longitudinal controller, a serial distributed MPC \cite{zhou2019distributed} is formulated, where MPC controllers solve the optimization problem sequentially in a CAV string, with each CAV's optimal control sequence solved based on the predicted states of its predecessor. This scheme improves the performance of each MPC controller in the string \cite{zhou2019distributed}. At each time step $t$, each CAV receives its predecessor's predicted acceleration sequence $\begin{bmatrix} a_{i-1,t}^*,\cdots,a_{i-1,t+N_p}^*\end{bmatrix}$, predicted position sequence $\begin{bmatrix} p_{i-1,t}^*,\cdots,p_{i-1,t+N_p}^*\end{bmatrix}$, and predicted velocity sequence $\begin{bmatrix} v_{i-1,t}^*,\cdots,v_{i-1,t+N_p}^*\end{bmatrix}$ via V2V communication as described in section \ref{sec2}. The predicted acceleration sequence is used for state update in the prediction horizon using (\ref{eq:ddm}). the predicted position sequence is used to determine if the safety cost becomes active, while the current position $p_{i-1,t}$ is also used to calculate the current spacing deviation $\Delta d_{i}(t)$. Finally, the predicted velocity sequence is used to calculate the constraint on speed difference while the current velocity $v_{i-1,t}$ is also used to calculate the current speed difference $\Delta v_{i}(t)$.

The MPC controller solves the following optimization problem at each time step:

\begin{equation} \label{eq:lowercost}
\begin{split}
J_i^*(x_{i,0},\Gamma_i)=\min_{\gamma_{i,0}\rightarrow\gamma_{i,N_p}}\sum_{k=0}^{N_p-1} l(x_{i,k},\gamma_{i,k})\\
+\beta l(x_{i,N_p},\gamma_{i,N_p})
\end{split}
\end{equation}

\noindent subject to:
\begin{equation} \label{eq:lowerconst1}
x_{i,0} = x_i(t)
\end{equation}\\[-3ex]
\begin{equation} \label{eq:lowerconst2}
x_{i,k+1} =  f(x_{i,k},\gamma_{i,k},a_{i-1,k}), ~{}k\in \{0,1,\cdots,N_p-1\}
\end{equation}\\[-3ex]
\begin{equation} \label{eq:lowerconst3}
\Delta d_{min} \leq \Delta d_{i,k} \leq \Delta d_{max}, ~{}k\in \{0,1,\cdots,N_p\}
\end{equation}\\[-3ex]
\begin{equation} \label{eq:lowerconst4}
v_{i-1,k}^*-v_{max} \leq \Delta v_{i,k} \leq v_{i-1,k}^*-v_{min}, ~{}k\in \{0,1,\cdots,N_p\}
\end{equation}\\[-3ex]
\begin{equation} \label{eq:lowerconst5}
a_{min} \leq a_{i,k} \leq a_{max}, ~{}k\in \{0,1,\cdots,N_p\}
\end{equation}\\[-3ex]
\begin{equation} \label{eq:lowerconst6}
\gamma_{min} \leq \gamma_{i,k} \leq \gamma_{max}, ~{}k\in \{0,1,\cdots,N_p\}
\end{equation}\\[-3ex]
\begin{equation} \label{eq:lowerconst8}
\Delta v_{i,N_p} = 0
\end{equation}\\[-3ex]
\begin{equation} \label{eq:lowerconst9}
a_{i,N_p} = a_{i-1,N_p}^*
\end{equation}\\[-3ex]

\noindent where (\ref{eq:lowercost}) is the cost function, the stage cost $l(x_{i,k},\gamma_{i,k})$ is defined as :
\begin{equation}\label{stage cost}
\begin{split}
l(x_{i,k},\gamma_{i,k})&=R_{lon} \gamma_{i,k}^2+ x_{i,k}^\top Q_{lon} x_{i,k}\\
&+P_{lon}\cdot e^{\frac{\Delta d_{i,0}}{-\Delta d_{safe}}}\Delta v_{i,k}^2\Theta(v_{i,0}, \Delta d_{i,0},k^*)
\end{split}
\end{equation}

\noindent with
\begin{equation} \label{safety cost conditions}
\Theta(\cdot)=\vartheta(-\Delta v_{i,0})\vartheta(-\Delta d_{i,0}-\Delta d_{safe})\vartheta(N_p-k^*)
\end{equation}

\noindent wherein $Q_{lon}$ is a diagonal positive weight matrix, $R_{lon}$ and $P_{lon}$ are constant weights, $\Delta d_{safe}$ is a predefined positive distance threshold, $k^*$ is a precalculated time threshold, and $\vartheta(\cdot)$ is the Heaviside function\cite{wang2014rolling,wang2015cooperative}. 

The cost function (\ref{eq:lowercost}) comprises two terms, the first term penalizes step zero to step $N_p-1$, while the second term penalizes step $N_p$, $\beta$ is a constant magnification and a lower bound of $\beta$ will be mathematically derived to ensure asymptotic local stability. The stage cost (\ref{stage cost}) comprises a state cost, a control input cost, and a safety cost that only activates when $\Delta d_{i,0}\leq-\Delta d_{safe}$, $\Delta v_{i,0}\leq0$, and $k^*\leq N_p$, the exponential term $e^{\frac{\Delta d_{i,0}}{-\Delta d_{safe}}}$ ensures that the CAV gets a large enough penalty when it gets too close to its predecessor (i.e., $\Delta d_{i,0}\leq-\Delta d_{safe}$) so that the relative speed decreases when the spacing decreases. $k^*$ is precalculated to ensure that the CAV will not collide with its predecessor. When the CAV and its predecessor are on different roads, $k^*$ can be calculated using a time-space diagram, as shown in Fig. \ref{fig:k*}, at each time $t$, within the prediction horizon, the assumed trajectory of CAV$i$ at each predicted step is calculated by $p_{i,k}^a=p_{i-1,k}^*-d_{i,0}(t)$, $k=0,1,\cdots,N_p$. $k^*$ is the first step satisfying $p_{i,k}^a\geq0$, if no such $k^*$ exists within the prediction horizon, then $k^*=\infty$. If the predecessor-follower are on the same road, $k^{*}=0$ for every step the optimization problem is executed.

\begin{figure}[h]
    \centering
    \setlength{\abovecaptionskip}{0pt}
    \includegraphics[width=0.4\textwidth]{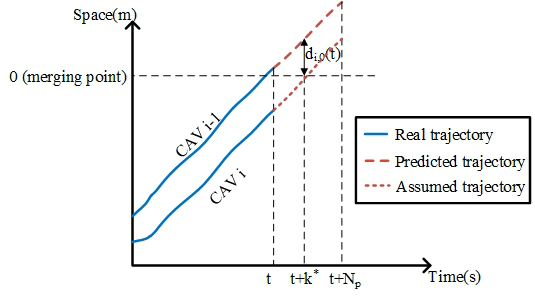}
    \caption{Calculation of $k^*$ using time-space diagram when CAV$i$ and CAV$i-1$ are on different roads}
    \label{fig:k*}
\end{figure}

The initial constraint (\ref{eq:lowerconst1}) restricts the optimization problem's initial state $x_{i,0}$ equal to the CAV's current state $x_i(t)$. (\ref{eq:lowerconst2}) is the discrete platoon system dynamics model in subsection \ref{sec4.1}.  (\ref{eq:lowerconst3}) sets boundary of the spacing deviation, it is important to note that (\ref{eq:lowerconst3}) is only incorporated to ensure the theoretical boundedness of the spacing deviation for the proof of local stability. Therefore, in practice, $\Delta d_{min}$ and $\Delta d_{max}$ can be any finite number as long as $\Delta d_{min}<\Delta d_{max}$. (\ref{eq:lowerconst4}) sets constraints on the speed difference where the lower and upper bound is calculated based on the speed limit, $v_{min}$ and $v_{max}$, and the predecessor's optimal speed, $v_{i-1,k}^*$. Therefore, (\ref{eq:lowerconst4}) is essentially the same as bounding the speed of CAV$i$ within $[v_{min},v_{max}]$. (\ref{eq:lowerconst5}) sets boundary on the acceleration while (\ref{eq:lowerconst6}) imposes constraints on the control inputs. (\ref{eq:lowerconst8}) and(\ref{eq:lowerconst9}) are the terminal constraints to ensure asymptotic local stability, furthermore, they ensure safety together with the safety cost, as they restrict that each CAV can adjust its speed and acceleration to match its predecessor's in $N_p$ steps.

\subsection{Local Stability Analysis} \label{sec4.3}

This subsection presents the asymptotic local stability (disturbance dissipation over time of each CAV) analysis of the controller proposed in subsection \ref{sec4.2}. Specifically, mathematical definition, conditions, and proof of asymptotic local stability is provided.

We consider the definition of Lyapunov local stability and asymptotic local stability adopted from\cite{willems1997introduction}:

\begin{defn}
For the system of $x_{i,k+1}=f(x_{i,k})$, the equilibrium point, $x_{i,e}$ is said to be Lyapunov locally stable if for every $\varepsilon$, there exists a $\delta>0$ such that if $||x_{i,0}-x_{i,e}||<\delta$, then for every $k \geq 0$, we have $||x_{i,k}-x_{i,e}||<\varepsilon$.
\end{defn}

\begin{defn}
For the system of $x_{i,k+1}=f(x_{i,k})$, the equilibrium point, $x_{i,e}$ is said to be asymptotically locally stable if it is Lyapunov locally stable, and there exists a $\delta>0$ such that if $||x_{i,0}-x_{i,e}||\rightarrow0$ as $t\rightarrow\infty$ if $||x_{i,0}-x_{i,e}||<\varepsilon$.
\end{defn}

The asymptotic local stability states that when a disturbance (i.e., deviation of state from the equilibrium state) is small enough (less than $\delta$), the disturbance will be completely dissipated over time, and the CAV is able to restore the equilibrium state.

Under the assumption that the leading vehicle CAV1 runs at a constant speed and the safety cost is inactive, we first prove the condition for the first follower in the virtual platoon, CAV2, to be asymptotically locally stable. Note that when CAV1 runs at a constant speed, $a_{1,k}=0$ for $k\in\{0,1,\dots,N_p\}$, therefore constraint (\ref{eq:lowerconst9}) of CAV2 becomes $a_{2,N_p}=0$. Based on this, the terminal constraints (\ref{eq:lowerconst8}) and (\ref{eq:lowerconst9}) become equivalent to constraining the terminal state as a steady state: $x_{2,N_p}=f(x_{2,N_p},\gamma_{2,N_p},0)$, which is a concept widely used in economic MPC \cite{amrit2011economic,ferramosca2010economic,rawlings2012fundamentals}. We futher define the optimal reachable steady state \cite{ferramosca2010economic}:

\begin{defn}
The optimal reachable steady state and input, $(x_{i,s},\gamma_{i,s})$, satisfy
\begin{align*}
(x_{i,s}, \gamma_{i,s}) &= \arg \min_{x_i,\gamma_i} l(x_i,\gamma_i) \\
\text{subject to} \quad & x_i = f(x_i, \gamma_i, 0), \\
& x_i \in R_{N_p,i}(x_{i,0}), \gamma_i \in G_{b}.
\end{align*}
\end{defn}
\noindent where $l(\cdot)$ is the stage cost, $f(\cdot)$ is the  discrete platoon system dynamics model, $R_{N_p,i}(x_{i,0})$ and $G_b$ are the $N_p$-step reachable set \cite{borrelli2017predictive} of $x_{i,0}$ and the boundary of input, respectively. We notate the stage cost corresponding to $(x_{i,s},\gamma_{i,s})$ as $\bar{l}(x_{i,0})=l(x_{i,s},\gamma_{i,s})$. The propositions related to the asymptotic local stability of CAV2 can be obtained, details as follow.

\begin{prop} \label{prop asymptotic 1}
Under the assumption that the leading vehicle CAV1 runs at a constant speed and the safety cost is inactive, for CAV2, for any $\varepsilon>0$, there exists a finite lower bound $\hat{\beta}(\epsilon)={\sum_{k=0}^{N_p-1} \left[ \alpha_l \sum_{j=0}^k \alpha_f^{(k-j)} (\|\gamma_{max} - \gamma_{min}\|_2) \right]}/{\epsilon}$ such that, if $\beta\geq\hat{\beta}(\epsilon)$, then for the optimal state trajectory $X_{2}^*=[x_{2,0}^*,\cdots,x_{2,N_p}^*]$ and control sequence $\Gamma_{2}^*=[\gamma_{2,0}^*,\cdots,\gamma_{2,N_p}^*]$, $l(x_{2,N_p}^*,\gamma_{2,N_p}^*)\leq\bar{l}(x_{2,0})+\varepsilon$.
\end{prop}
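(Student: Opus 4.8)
The plan is to adapt the standard sub-optimality (comparison) argument of economic MPC, exploiting that here the terminal stage cost carries the tunable weight $\beta$. First I would produce a feasible but generally sub-optimal input sequence $\tilde{\Gamma}_2=[\tilde{\gamma}_{2,0},\dots,\tilde{\gamma}_{2,N_p}]$ for CAV2's problem (\ref{eq:lowercost})--(\ref{eq:lowerconst9}). Since $x_{2,s}$ is by construction a steady state contained in $R_{N_p,2}(x_{2,0})$, there is an admissible sequence $\tilde{\gamma}_{2,0},\dots,\tilde{\gamma}_{2,N_p-1}\in G_b$ steering $x_{2,0}$ to $\tilde{x}_{2,N_p}=x_{2,s}$ while respecting (\ref{eq:lowerconst3})--(\ref{eq:lowerconst6}); I append $\tilde{\gamma}_{2,N_p}=\gamma_{2,s}$. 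I would then check $\tilde{\Gamma}_2$ is feasible: because CAV1 runs at constant speed, $a_{1,k}^*=0$ for all $k$, so every steady state of $f(\cdot,\cdot,0)$ has the form $x_{2,s}=[\Delta d_{2,s},0,0]^{\top}$ with $\gamma_{2,s}=0$; hence $\Delta v_{2,N_p}=0$ and $a_{2,N_p}=a_{1,N_p}^*=0$ hold, so the terminal constraints (\ref{eq:lowerconst8})--(\ref{eq:lowerconst9}) are met and $\gamma_{2,s}\in G_b$. The cost of this candidate is $\sum_{k=0}^{N_p-1} l(\tilde{x}_{2,k},\tilde{\gamma}_{2,k})+\beta\,\bar{l}(x_{2,0})$, since its terminal stage cost equals $l(x_{2,s},\gamma_{2,s})=\bar{l}(x_{2,0})$.

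Next, optimality of $(X_2^*,\Gamma_2^*)$ gives $\sum_{k=0}^{N_p-1} l(x_{2,k}^*,\gamma_{2,k}^*)+\beta\,l(x_{2,N_p}^*,\gamma_{2,N_p}^*)\le \sum_{k=0}^{N_p-1} l(\tilde{x}_{2,k},\tilde{\gamma}_{2,k})+\beta\,\bar{l}(x_{2,0})$. Instead of discarding the optimal running cost, I would keep it and rearrange to $\beta\,l(x_{2,N_p}^*,\gamma_{2,N_p}^*)\le \beta\,\bar{l}(x_{2,0})+\sum_{k=0}^{N_p-1}[\,l(\tilde{x}_{2,k},\tilde{\gamma}_{2,k})-l(x_{2,k}^*,\gamma_{2,k}^*)\,]$. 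Comparing the two horizon trajectories (rather than comparing the optimal trajectory to the constant steady-state trajectory) is exactly what lets $\bar{l}(x_{2,0})$ pass through unmodified, so that $\hat{\beta}$ need not depend on it.

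The crux is bounding each difference $l(\tilde{x}_{2,k},\tilde{\gamma}_{2,k})-l(x_{2,k}^*,\gamma_{2,k}^*)$. With the safety cost inactive the stage cost reduces to the quadratic $R_{lon}\gamma^2+x^{\top}Q_{lon}x$, which is Lipschitz with some modulus $\alpha_l$ on the compact feasible region cut out by (\ref{eq:lowerconst3})--(\ref{eq:lowerconst6}); thus the difference is at most $\alpha_l(\|\tilde{x}_{2,k}-x_{2,k}^*\|_2+\|\tilde{\gamma}_{2,k}-\gamma_{2,k}^*\|_2)$. The input gap satisfies $\|\tilde{\gamma}_{2,k}-\gamma_{2,k}^*\|_2\le \|\gamma_{max}-\gamma_{min}\|_2$, and, since both trajectories start at $x_{2,0}$ and obey $x_{k+1}=Ax_k+B\gamma_k$ (using $a_{1,k}^*=0$), $\tilde{x}_{2,k}-x_{2,k}^*=\sum_{j=0}^{k-1}A^{\,k-1-j}B(\tilde{\gamma}_{2,j}-\gamma_{2,j}^*)$, so $\|\tilde{x}_{2,k}-x_{2,k}^*\|_2\le \|\gamma_{max}-\gamma_{min}\|_2\sum_{j=0}^{k-1}\alpha_f^{\,k-1-j}$ with $\alpha_f$ a growth constant of $f$ absorbing $\|A\|_2$ and $\|B\|_2$. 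Collecting both contributions yields $l(\tilde{x}_{2,k},\tilde{\gamma}_{2,k})-l(x_{2,k}^*,\gamma_{2,k}^*)\le \alpha_l\sum_{j=0}^{k}\alpha_f^{\,k-j}\|\gamma_{max}-\gamma_{min}\|_2$ (the $j=k$ term absorbing the input gap), so summing over $k$ the bound is precisely $\hat{\beta}(\epsilon)\,\epsilon$. Substituting and dividing by $\beta>0$ gives $l(x_{2,N_p}^*,\gamma_{2,N_p}^*)\le \bar{l}(x_{2,0})+\frac{\hat{\beta}(\epsilon)}{\beta}\epsilon\le \bar{l}(x_{2,0})+\varepsilon$ whenever $\beta\ge\hat{\beta}(\epsilon)$ (identifying $\epsilon$ with $\varepsilon$), which is the claim.

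I expect the main difficulty to be this last bounding step: choosing $\alpha_l$ and $\alpha_f$ so that the convolution $\sum_{j=0}^{k}\alpha_f^{\,k-j}$ comes out exactly as written (reconciling $\|A\|_2$, $\|B\|_2$, and the Lipschitz modulus of the quadratic stage cost over the feasible box, and accommodating the extra $j=k$ summand), and verifying that ``safety cost inactive'' is precisely what furnishes a uniform Lipschitz constant for $l$. A secondary check is that $x_{2,s}\in R_{N_p,2}(x_{2,0})$ indeed supplies a constraint-admissible steering sequence, so that $\tilde{\Gamma}_2$ is a legitimate feasible point of the MPC.
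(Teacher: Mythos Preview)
Your proposal is correct and follows essentially the same approach as the paper: construct a feasible candidate sequence steering $x_{2,0}$ to the optimal reachable steady state, compare its cost with that of the optimizer, and bound the running-cost difference via Lipschitz constants $\alpha_l$ and $\alpha_f$ to obtain the convolution sum defining $\hat{\beta}(\epsilon)$. The only cosmetic difference is that the paper packages the comparison as a contradiction (any feasible trajectory with terminal stage cost exceeding $\bar{l}(x_{2,0})+\varepsilon$ must have strictly larger total cost than the candidate, hence cannot be optimal), whereas you argue directly from $J_2^*(x_{2,0})\le J_2(x_{2,0},\tilde{\Gamma}_2)$; the Lipschitz bounding step and the resulting formula for $\hat{\beta}(\epsilon)$ are identical.
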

\begin{proof}
Let $\bar{X}_2=[\bar{x}_{2,0},\cdots,\bar{x}_{2,N_p}]$, $\bar{\Gamma}_2=[\bar{\gamma}_{2,0},\cdots,\bar{\gamma}_{2,N_p}]$ be the state trajectory and control sequence that satisfy the constraints (\ref{eq:lowerconst1})-(\ref{eq:lowerconst9}) and drives CAV2 to the optimal reachable steady state and input $(x_{2,s},\gamma_{2,s})$, the associated cost becomes:
\begin{align*}
J_2(x_{2,0},\bar{\Gamma}_2)=\sum_{k=0}^{N_p-1} l(\bar{x}_{2,k},\bar{\gamma}_{2,k})
+\beta \bar{l}(x_{2,0})
\end{align*}

Considering an arbitrary state trajectory and control sequence $\hat{X}_2=[\hat{x}_{2,0},\cdots,\hat{x}_{2,N_P}]$ and $\hat{\Gamma}_2=[\hat{\gamma}_{2,0},\cdots,\hat{\gamma}_{2,N_p}]$ that satisfy the constraints (\ref{eq:lowerconst1})-(\ref{eq:lowerconst9}) and drive CAV2 into a terminal state satisfying the steady state terminal constraint $f(\hat{x}_{2,N_p}, \hat{\gamma}_{2,N_p}, 0) = \hat{x}_{2,N_p}$, and $l(\hat{x}_{2,N_p}, \hat{\gamma}_{2,N_p}) > \bar{l}(x_{2,0}) + \epsilon$, the associated cost becomes:
\[J_{2}(x_{2,0},\hat{\Gamma}_2) = \sum_{k=0}^{N_p - 1} l(\hat{x}_{2,k}, \hat{\gamma}_{2,k}) + \beta l(\hat{x}_{2,N_p}, \hat{\gamma}_{2,N_p})\]
Taking the cost difference of the two state-control sequences above, we have:
\begin{equation}\label{local stable1}
\begin{split}
J_2(x_{2,0}, \bar{\Gamma}_2) &- J_2(x_{2,0},\hat{\Gamma}_2) = \beta \left[\bar{l}(x_{2,0}) - l(\hat{x}_{2,N_p}, \hat{\gamma}_{2,N_p})\right] \\&+\sum_{k=0}^{N_p - 1} \left[l(\bar{x}_{2,k},\bar{\gamma}_{2,k}) - l(\hat{x}_{2,k}, \hat{\gamma}_{2,k})\right]
\end{split}
\end{equation}

An upper bound of this cost difference can be calculated. We start with the upper bound of the second term, as the states and control of CAV2 are bounded, and the dynamics $f(\cdot)$ and the stage cost $l(\cdot)$ are both continuous. Within the bounded set defined by state and control constraints, both $f(\cdot)$ and $l(\cdot)$ satisfy the local Lipschitz condition \cite{khalil2002nonlinear}:
\begin{equation*}
|l(x_a,\gamma_a )-l(x_b,\gamma_b)| \leq \alpha_l (\|(x_a,\gamma_a )-(x_b,\gamma_b )\|_2)
\end{equation*}
and 
\begin{equation*}
\|f(x_a,\gamma_a )-f(x_b,\gamma_b )\|_2 \leq \alpha_f (\|(x_a,\gamma_a )-(x_b,\gamma_b )\|_2) 
\end{equation*}
where 
$\alpha_l=\underset{(x,\gamma)\in B_r}{\max} \|\nabla l(x,\gamma)\|_2=\underset{(x,\gamma)\in B_r}{\max} \| \begin{bmatrix}2Q_{lon}x \\ 2R_{lon}\gamma\end{bmatrix}\|_2= 2r \rho_r(blockdiag(Q_{lon},R_{lon}))$, $\alpha_f= \| [A B]\|_2$. $B_r$ is a ball with radius $r=\sqrt{\Delta d_r^2+\Delta v_r^2+a_r^2+\gamma_r^2}$, in which the variables $\Delta d_r=max(|\Delta d_{min}|,|\Delta d_{max}|), \Delta v_r=\max(|\Delta v_{min}|,|\Delta v_{max}|)$ where $\Delta v_{min}$ and $\Delta v_{max}$ are the minimum and maximum possible bound of speed difference yielded by (\ref{eq:lowerconst4}), $a_r=\max(|a_{min}|,|a_{max}|)$, and $\gamma_r=\max(|\gamma_{min}|,|\gamma_{max}|)$. $\rho_r(blockdiag(Q_{lon},R_{lon})$ denotes the largest absolute value of eigenvalue (i.e., spectral radius) of $\begin{bmatrix}
    Q_{lon} & 0\\
    0 & R_{lon}
\end{bmatrix}$.

Based on the calculated Lipschitz constants $\alpha_l$ and $\alpha_f$, we can now calculate the upper bound of the second term of the right hand side of (\ref{local stable1}) \cite{fagiano2013generalized}:

\noindent when $k=0$,
\begin{align*}
l(\bar{x}_{2,0}, \bar{\gamma}_{2,0}) - l(\hat{x}_{2,0}, \hat{\gamma}_{2,0}) \leq |l(\bar{x}_{2,0}, \bar{\gamma}_{2,0}) - l(\hat{x}_{2,0}, \hat{\gamma}_{2,0})|\\ 
\leq \alpha_l \|\bar{\gamma}_{2,0} - \hat{\gamma}_{2,0}\|_2 = \alpha_l \sum_{j=0}^{0} \alpha_f^{(k-j)}(\|\bar{\gamma}_{2,j} - \hat{\gamma}_{2,j}\|_2)
\end{align*}
\noindent when $k=1,$
\begin{align*}
&~{}~{}l(\bar{x}_{2,1}, \bar{\gamma}_{2,1}) - l(\hat{x}_{2,1}, \hat{\gamma}_{2,1})\\& \leq |l(\bar{x}_{2,1}, \bar{\gamma}_{2,1}) - l(\hat{x}_{2,1}, \hat{\gamma}_{2,1})| \\
&\leq \alpha_l \|(\bar{x}_{2,1},\bar{\gamma}_{2,1}) - (\hat{x}_{2,1}, \hat{\gamma}_{2,1})\|_2 \\
&\leq \alpha_l \|\bar{x}_{2,1} - \hat{x}_{2,1}\|_2 + \alpha_l \|\bar{\gamma}_{2,1} - \hat{\gamma}_{2,1}\|_2 \\
&\leq \alpha_l (\alpha_f \|\bar{\gamma}_{2,0} - \hat{\gamma}_{2,0}\|_2) + \alpha_l \|\bar{\gamma}_{2,1} - \hat{\gamma}_{2,1}\|_2 \\
&= \alpha_l \sum_{j=0}^{1} \alpha_f^{(k-j)}(\|\bar{\gamma}_{2,j} - \hat{\gamma}_{2,j}\|_2)
\end{align*}

\noindent for generic k, it can be derived with recursion that
\begin{align*}
l(\bar{x}_{2,k}, \bar{\gamma}_{2,k}) - l(\hat{x}_{2,k}, \hat{\gamma}_{2,k}) &\leq \alpha_l \sum_{j=0}^{k} \alpha_f^{(k-j)} \|\bar{\gamma}_{2,j} - \hat{\gamma}_{2,j}\|_2 \\
&\leq \alpha_l \sum_{j=0}^{k} \alpha_f^{(k-j)} \|\gamma_{max} - \gamma_{min}\|_2
\end{align*}
\noindent substituting this to (\ref{local stable1}), we obtain:
\begin{align*}
J_2(x_{2,0}, \bar{\Gamma}_{2}) - J_2(x_{2,0}, \hat{\Gamma}_{2})
\leq \beta \left[ \bar{l}(x_{2,0}) - l(\hat{x}_{2,N_p}, \hat{\gamma}_{2,N_p}) \right] \\
+ \sum_{k=0}^{N_p-1} \left[ \alpha_l \sum_{j=0}^k \alpha_f^{(k-j)} (\|\gamma_{max} - \gamma_{min}\|_2) \right]
\end{align*}
\noindent by how we defined $\bar{x}_2,\bar{{\gamma}_2}$ and $\hat{x}_2,\hat{{\gamma}_2}$, we further have:
\begin{align*}
&~{}~{}J_2(x_{2,0}, \bar{\Gamma}_{2}) - J_2(x_{2,0}, \hat{\Gamma}_{2})\\
&< -\beta \varepsilon + \sum_{k=0}^{N_p-1} \left[ \alpha_l \sum_{j=0}^k \alpha_f^{(k-j)} (\|\gamma_{max} - \gamma_{min}\|_2) \right]
\end{align*}

Based on this upper bound, by setting $\beta\geq\hat{\beta}(\varepsilon)={\sum_{k=0}^{N_p-1} \left[ \alpha_l \sum_{j=0}^k \alpha_f^{(k-j)} (\|\gamma_{max} - \gamma_{min}\|_2) \right]}/{\epsilon}$, it is guaranteed that $J_2(x_{2,0}, \bar{\Gamma}_{2}) - J_2(x_{2,0}, \hat{\Gamma}_{2})<0$. However, it is obvious that the optimization problem defined in subsection \ref{sec4.2} is convex when the safety cost is inactive and a global optimal control sequence $\Gamma_2^*=[\gamma_{2,0}^*,\cdots,\gamma_{2,N_p}^*]$ exists, such that $J_2(x_{2,0}, \bar{\Gamma}_{2}) - J_2(x_{2,0}, \Gamma_2^*)\geq0$. By the contradictory of cost difference, the nature of $\Gamma_2^*$ contradicts the definition and attributes of $\hat{\Gamma}_{2}$. Further by how we defined $\hat{\Gamma}_{2}$, it can be proved that $l(x_{2,N_p}^*,\gamma_{2,N_p}^*)\leq\bar{l}(x_{2,0})+\varepsilon$.
\end{proof}

\begin{prop} \label{prop asymptotic 2}
Under the assumption that the leading vehicle CAV1 runs at a constant speed and the safety cost is inactivate, CAV2 is asymptotically locally stable if $\varepsilon\rightarrow0$, $\beta\geq\hat{\beta}(\varepsilon)$, and $N_p\geq3$.
\end{prop}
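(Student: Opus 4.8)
The plan is to use the MPC optimal value function $V(x_2(t)):=J_2^*(x_2(t),\Gamma_2)$ (which is well defined and continuous on the feasible set because, with the safety cost inactive, the program is a convex QP, indeed strictly convex in the control sequence) as a Lyapunov function, and to feed the terminal-cost estimate of Proposition~\ref{prop asymptotic 1} into its descent inequality. Because CAV1 travels at constant speed, $a_{1,k}\equiv 0$, so constraint (\ref{eq:lowerconst9}) reduces to $a_{2,N_p}=0$; together with (\ref{eq:lowerconst8}) and optimality of the magnified terminal stage cost (which drives $\gamma_{2,N_p}^*=0$), every optimal terminal pair is a genuine steady state $(x_{2,N_p}^*,0)$ with $x_{2,N_p}^*=[\Delta d_{2,N_p}^*,0,0]^\top$, and the equilibrium is $x_{2,e}=[0,0,0]^\top$.

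First I would sandwich $V$ near $x_{2,e}$. Since all stage and terminal costs are nonnegative when the safety cost is off, $V(x_{2,0})\ge l(x_{2,0},\gamma_{2,0}^*)\ge x_{2,0}^\top Q_{lon}x_{2,0}\ge\lambda_{\min}(Q_{lon})\|x_{2,0}\|_2^2$, so $V$ is positive definite. For the upper bound I would exploit $N_p\ge 3$: the pair $(A,B)$ of (\ref{eq:ddm}) is controllable, since $\det[\,B\ \ AB\ \ A^2B\,]=-T_s^6\neq 0$, so there is a control $(\gamma_0,\gamma_1,\gamma_2)$ depending linearly on $x_{2,0}$ that steers the state to the origin in three steps, after which $\gamma_k\equiv 0$ holds it there. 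For $\|x_{2,0}\|_2$ small this candidate satisfies (\ref{eq:lowerconst3})--(\ref{eq:lowerconst6}) and the terminal constraints (the $\gamma$-box and the state bounds are slack near the origin), its terminal cost is $0$ (so $\beta$ does not inflate it), and its running cost is a positive-definite quadratic in $x_{2,0}$, giving $V(x_{2,0})\le C\|x_{2,0}\|_2^2$ with $C$ independent of $\beta$. The same reachability fact shows $x_{2,e}\in R_{N_p,2}(x_{2,0})$ for small $x_{2,0}$, whence $\bar l(x_{2,0})=0$ on a neighbourhood $\mathcal N$ of $x_{2,e}$.

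Next I would derive the descent inequality by recursive feasibility. The predicted predecessor acceleration is exact ($a_{1,k}\equiv 0$), so the realized successor state is $x_2(t{+}1)=x_{2,1}^*$; at $t{+}1$ take the shifted candidate $\tilde\Gamma_2=(\gamma_{2,1}^*,\dots,\gamma_{2,N_p}^*,0)$, whose state trajectory is $(x_{2,1}^*,\dots,x_{2,N_p}^*,x_{2,N_p}^*)$ precisely because $(x_{2,N_p}^*,0)$ is a steady state, so $\tilde\Gamma_2$ is feasible for the $t{+}1$ problem. Telescoping its cost gives $V(x_2(t{+}1))\le J_2(x_2(t{+}1),\tilde\Gamma_2)=V(x_2(t))-l(x_2(t),\gamma_{2,0}^*)+l(x_{2,N_p}^*,\gamma_{2,N_p}^*)$. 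Applying Proposition~\ref{prop asymptotic 1} with $\beta\ge\hat\beta(\varepsilon)$ and using $\bar l(x_2(t))=0$ on $\mathcal N$, one gets $l(x_{2,N_p}^*,\gamma_{2,N_p}^*)\le\bar l(x_2(t))+\varepsilon=\varepsilon$, so $V(x_2(t{+}1))-V(x_2(t))\le-\lambda_{\min}(Q_{lon})\|x_2(t)\|_2^2+\varepsilon$ on $\mathcal N$.

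Finally I would close the argument. The sandwich bounds make a small sub-level set $\Omega_c=\{x:V(x)\le c\}$ positively invariant and contained in $\mathcal N$, which already yields Lyapunov local stability (pick $\delta$ with $C\delta^2<c$). The decrease inequality for fixed $\varepsilon$ only guarantees convergence to an $O(\sqrt\varepsilon)$-neighbourhood of $x_{2,e}$; letting $\varepsilon\to 0$ (equivalently, taking $\beta\ge\hat\beta(\varepsilon)$ and passing to the limit) removes the residual term, so $V(x_2(t{+}1))-V(x_2(t))\le-\lambda_{\min}(Q_{lon})\|x_2(t)\|_2^2<0$ for $x_2(t)\ne x_{2,e}$, and the standard discrete-time Lyapunov theorem gives $\|x_2(t)-x_{2,e}\|\to 0$, i.e.\ asymptotic local stability. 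I expect the main obstacle to be this recursive-feasibility/telescoping step combined with the passage to the limit: one must use precisely that CAV1's constant speed turns (\ref{eq:lowerconst9}) into $a_{2,N_p}=0$ so that the optimal terminal state is a true steady state that can be held for one extra step, and then argue that the $\varepsilon$-dependent practical decrease upgrades to genuine asymptotic convergence while the trajectory stays inside the neighbourhood $\mathcal N$ on which $\bar l$ vanishes and the quadratic bounds hold.
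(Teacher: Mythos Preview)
Your proof is correct but takes a genuinely different route from the paper's. You use the standard economic-MPC value-function argument: take $V=J_2^*$ as Lyapunov function, sandwich it between $\beta$-independent quadratics using $N_p\ge 3$ controllability for the upper bound, derive the per-step descent $V(x_2(t{+}1))-V(x_2(t))\le -l(x_2(t),\gamma_{2,0}^*)+l(x_{2,N_p}^*,0)$ via the shifted-sequence telescoping, and then invoke Proposition~\ref{prop asymptotic 1} with $\bar l=0$ on $\mathcal N$ to bound the extra term by $\varepsilon$. The paper instead uses the \emph{terminal stage cost} $l(x_{2,N_p}^*,\gamma_{2,N_p}^*)$ as the Lyapunov-like quantity: recursive feasibility shows it is non-increasing step to step, and an explicit three-step maneuver $[-0.5\gamma_f,\gamma_f,-0.5\gamma_f,0,\dots,0]$ (this is where the paper uses $N_p\ge 3$) applied after $N_p$ steps produces a steady terminal state with strictly smaller $|\Delta d|$, hence strictly smaller terminal cost once Proposition~\ref{prop asymptotic 1} is invoked with $\varepsilon\to 0$; the paper then hands off to zero-terminal-constraint MPC once the origin is $N_p$-reachable. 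Your approach is cleaner, closer to textbook MPC stability theory, and gives a sharper per-step estimate; the paper's argument is more bespoke but in principle applies beyond your neighbourhood $\mathcal N$ (the terminal-cost decrement does not require the origin to be $N_p$-reachable from the \emph{current} state), which aligns with the paper's enhanced-feasibility narrative even though the proposition itself only asserts local stability.
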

\begin{proof}
Under the assumption, the terminal state constraints (\ref{eq:lowerconst8}) and (\ref{eq:lowerconst9}) of CAV2 become $\Delta v_{2,N_p}=0$ and $a_{2,N_p}=0$. Therefore, at time t+1, there exists a feasible but not necessarily optimal control sequence $\Gamma_2(t+1)=[\gamma_{2,1}^*(t), \gamma_{2,2}^*(t), \cdots, \gamma_{2,N_p}^*(t),0]$, where $\gamma_{2,1}^*(t), \gamma_{2,2}^*(t), \cdots, \gamma_{2,N_p}^*(t)$ are the last $N_p$ elements of the optimal control sequence at time t, that drives CAV2 to the terminal state that is identical to the optimal terminal state at time t, $x_{2,N_p}^*(t)$.

In a worst-case scenario, the optimal terminal states remains unchanged until time $t+N_p-1$. At this time, the optimal control sequence $\Gamma_2^*(t+N_p-1)=[\gamma_{2,N_p}^*(t), 0, 0, \cdots, 0]$ ensures the initial state at time $t+N_p$ satisfies $x_{2,0}(t+N_p)=x_{2,N_p}^*(t)$. At time $t+N_p$, if $N_p \geq 3$, let $\gamma_f$ be any feasible control input that has the same sign as $\Delta d_{2,0}(t+N_p)$, there exists a feasible control sequence: $[-0.5\gamma_f, \gamma_f, -0.5\gamma_f, 0,0, \cdots,0]$ that leads to a terminal state $x_{2,N_p}(t+N_p)$ satisfying $l(x_{2,N_p}(t+N_p),\gamma_{2,N_p}(t+N_p))<l(x_{2,N_p}^*(t),\gamma_{2,N_p}^*(t))$ and the terminal constraints (\ref{eq:lowerconst8})-(\ref{eq:lowerconst9}).

Therefore, for every $N_p$ steps, when $N<N_p$, the terminal stage cost satisfies $l(x_{2,N_p}^*(t+N),\gamma_{2,N_p}^*(t+N))\leq l(x_{2,N_p}^*(t),\gamma_{2,N_p}^*(t))$. This indicates CAV2 is Lyapunov local stable by choosing the terminal stage cost as the Lyapunov function. When $N=N_p$, by applying Proposition \ref{prop asymptotic 1}, if $\varepsilon\rightarrow0$, the terminal stage cost satisfies $l(x_{2,N_p}^*(t+N_p),\gamma_{2,N_p}^*(t+N_p))< l(x_{2,N_p}^*(t),\gamma_{2,N_p}^*(t))$, and the terminal spacing deviation is guaranteed to decrease. Following this decrement of the terminal spacing deviation until the equilibrium state lies within the reachable set of the MPC controller, the controller will behave as an MPC controller with terminal state constrained to 0 \cite{fagiano2013generalized} and eventually converge to the equilibrium state, indicating that CAV2 is asymptotically locally stable.
\end{proof}

With the asymptotic local stability of CAV2 proven, we can further prove the asymptotic local stability conditions of the rest of the followers.

\begin{prop}
Under the assumption that the leading vehicle CAV1 runs at a constant speed and the safety cost is inactive, for $i\geq3$, CAV$i$ is asymptotically locally stable if $\varepsilon\rightarrow0$, $\beta\geq\hat{\beta}(\varepsilon)$, and $N_p\geq3$.
\end{prop}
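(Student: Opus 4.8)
The plan is to argue by induction on the vehicle index $i$, taking Proposition \ref{prop asymptotic 2} (asymptotic local stability of CAV2) as the base case and the asymptotic local stability of CAV$(i-1)$ as the inductive hypothesis. The crucial observation is that asymptotic local stability of CAV$(i-1)$ means $x_{i-1,k}\to x_{i-1,e}$ whenever the initial disturbance is small; in particular $a_{i-1,k}^*\to 0$ and $\Delta v_{i-1,k}^*\to 0$. Hence the predecessor of CAV$i$ asymptotically behaves exactly like the constant-speed leader assumed in Propositions \ref{prop asymptotic 1} and \ref{prop asymptotic 2}, and the perturbation that CAV$(i-1)$ injects into CAV$i$'s dynamics --- through $a_{i-1,k}$ in (\ref{eq:ddm}) and through the terminal constraint (\ref{eq:lowerconst9}) --- decays to zero. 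The proof then reduces to showing that this vanishing predecessor acceleration does not destroy the Lyapunov decrease of the terminal stage cost that was established for CAV2.

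Concretely, I would first fix an arbitrarily small tolerance $\sigma>0$ and invoke the inductive hypothesis to obtain a time $\bar t$ after which $\|a_{i-1,k}^*\|_2$ and $\|\Delta v_{i-1,k}^*\|_2$ stay below $\sigma$ for all steps appearing in any prediction horizon opened after $\bar t$. On $[\bar t,\infty)$ the terminal constraint $a_{i,N_p}=a_{i-1,N_p}^*$ lies within a $\sigma$-ball of the steady-state constraint $a_{i,N_p}=0$, so the optimal-reachable-steady-state construction underlying Proposition \ref{prop asymptotic 1} applies up to a perturbation that is continuous in, and vanishing with, $\sigma$ (using the same local Lipschitz bounds on $f(\cdot)$ and $l(\cdot)$ over the bounded set $B_r$ that are already invoked there). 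Re-running the shifted-candidate-sequence argument of Proposition \ref{prop asymptotic 2} --- now appending a terminal input compatible with $a_{i-1,N_p}^*(t{+}1)$ rather than with $0$, followed by the same $[-0.5\gamma_f,\gamma_f,-0.5\gamma_f,0,\dots,0]$ correction made admissible by $N_p\ge 3$ --- yields, over every block of $N_p$ consecutive steps, a strict decrease of the terminal stage cost up to an error of order $\sigma$. Since $\beta\ge\hat\beta(\varepsilon)$ with $\varepsilon\to 0$ leaves a $-\beta\varepsilon$ slack in the cost comparison, taking $\sigma$ small makes the guaranteed decrease dominate the error. Thus the terminal stage cost of CAV$i$ serves as a Lyapunov function along the closed loop, the terminal spacing deviation shrinks monotonically, and once the equilibrium enters the $N_p$-step reachable set $R_{N_p,i}(x_{i,0})$ the controller reduces to a terminally-zero-constrained MPC and converges, giving asymptotic local stability of CAV$i$ and closing the induction.

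The step I expect to be the main obstacle is precisely this transfer: at any finite time $a_{i-1,k}^*$ is generically nonzero, so (\ref{eq:lowerconst9}) is not an exact steady-state constraint and Proposition \ref{prop asymptotic 1} cannot be quoted verbatim. Rigor here requires either a cascade / input-to-state-stability viewpoint --- CAV$i$'s error dynamics are locally ISS with respect to the vanishing ``input'' supplied by CAV$(i-1)$'s residual acceleration --- or an explicit time-scale-separation estimate showing that the perturbation of both the optimal reachable steady state and the candidate-sequence cost bound is $O(\|a_{i-1}\|_\infty+\|\Delta v_{i-1}\|_\infty)$ and is therefore eventually dominated by the slack from Proposition \ref{prop asymptotic 1}. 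A secondary subtlety is recursive feasibility of the shifted candidate sequence when the predecessor's predicted position/velocity/acceleration sequences themselves change between $t$ and $t{+}1$; the terminal-constraint structure handles this, but one must check that the appended terminal input remains within $[\gamma_{min},\gamma_{max}]$ and that the speed-difference bound (\ref{eq:lowerconst4}) stays satisfied, both of which again follow once the predecessor disturbance is small.
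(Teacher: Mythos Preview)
Your inductive strategy matches the paper's exactly: assume asymptotic local stability of CAV$(i-1)$, observe that its acceleration therefore vanishes, and then reduce to the CAV2 analysis of Propositions \ref{prop asymptotic 1}--\ref{prop asymptotic 2}. The paper's proof, however, is far terser than yours: it simply states that ``after CAV2 has converged'' CAV3 sees a constant-speed predecessor and the earlier proofs apply verbatim, then iterates. In other words, the paper does \emph{not} work through the ISS/perturbation analysis you outline; it treats asymptotic convergence as if it were exact convergence after some time and quotes Propositions \ref{prop asymptotic 1} and \ref{prop asymptotic 2} directly.

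The extra machinery you propose --- bounding the effect of a residual $a_{i-1,k}^*$ on the optimal reachable steady state and on the shifted candidate sequence via the same Lipschitz constants $\alpha_l,\alpha_f$, then absorbing the resulting $O(\sigma)$ error into the $\beta\varepsilon$ slack --- is precisely what would be needed to make the paper's one-paragraph argument rigorous. So you have correctly identified the gap (``at any finite time $a_{i-1,k}^*$ is generically nonzero, so (\ref{eq:lowerconst9}) is not an exact steady-state constraint'') that the paper leaves open, and your proposal goes further than the published proof in closing it. Your secondary concern about recursive feasibility when the predecessor's predicted sequences shift is likewise not addressed in the paper's version.
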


\begin{proof}
By Proposition \ref{prop asymptotic 2}, the first follower, CAV2, is asymptotically locally stable. Therefore, CAV2 will converge to the equilibrium state and run at a constant speed. After CAV2 has converged, CAV3 can be treated the same as CAV2, and the proof of asymptotic local stability follows a process similar to the proof of Proposition \ref{prop asymptotic 1} and Proposition \ref{prop asymptotic 2}. Again, after CAV3 converges, the asymptotic local stability of the rest of the followers can be proved in the same manner.
\end{proof}

\begin{rem}
    The Propositions in this subsection require the leading vehicle to maintain a constant velocity, and the safety cost to be inactive. The constant-speed requirement aligns with the conditions in studies \cite{dunbar2011distributed,li2020distributed,zhou2019distributed}. Practically, this assumption is justifiable as the main objective of the leading vehicle is to operate relatively constantly at the free-flow speed. The safety cost-inactive assumption is also justifiable as the cost only becomes active when the following three conditions are all satisfied: (1) the predecessor-follower pair drive on the same road at at least one time step within the prediction horizon; (2) The spacing deviation satisfies $\Delta d_{i,0}\leq -\Delta d_{safe}$; (3) The speed difference satisfies $\Delta v_{i,0}\leq 0$. Therefore, the safety cost only becomes active in extreme situations (e.g., extreme initial states, emergency breaking of the predecessor) and does not effect the ordinary following situation. Furthermore, when the safety cost becomes active, the CAV operates in a safety mode where local stability becomes far less important than ensuring safety.
\end{rem}

\begin{rem} \label{remark asymptotic}
The propositions in this subsection also implicitly assume initial feasibility and recursive feasibility. Recall that constraint (\ref{eq:lowerconst3}) only sets theoretical boundaries on the spacing deviation for the proof in Proposition \ref{prop asymptotic 1}, its impact on feasibility can be ignored by setting a small enough $\Delta d_{min}$ and a large enough $d_{max}$. Constraint (\ref{eq:lowerconst4}) is essentially equivalent to constraining the velocity within $[v_{min},v_{max}]$. Given initial feasibility, when the leading vehicle runs at a constant speed, it is not difficult to verify that there exists $\Gamma_i(t+1)=[\gamma_{i,1}^*(t),\gamma_{i,2}^*(t),\cdots,\gamma_{i,N_p}^*(t),0]$, which is a feasible solution at time $t+1$, indicating recursive feasibility.
\end{rem}

 A more detailed analysis on initial feasibility will be presented in subsection \ref{sec4.5}.

\subsection{String Stability Analysis} \label{sec4.4}
In this subsection, we provide the string stability (disturbance attenuation through a vehicular string) analysis of the controller proposed in subsection \ref{sec4.2}. Specifically, mathematical definition, conditions, and proof of string stability is provided.

We consider the $l_2$ norm string stability defined as follow \cite{zhou2019distributed,naus2010string}.

\begin{defn}
For a system (a CAV platoon with length N), it is $l_2$ norm string stable if and only if

$\frac{\|\Delta d_{i+1}(t)\|_{l2}}{\|\Delta d_{i}(t)\|_{l2}} \leq 1$  for $\forall i \in \{1, 2, \ldots, N-1\}$
\end{defn}

We first derive the explicit solution of the optimization problem proposed in subsection \ref{sec4.2} when the constraints (\ref{eq:lowerconst3})-(\ref{eq:lowerconst9}) and the safety cost are inactive (i.e., the constraints don't effect the control law and the CAV does not operate in the safety mode), which we refer as the finite time unconstrained optimal control (FTUOC) problem. Note that the FTUOC differs from the standard linear quadratic regulator (LQR) due to the external input of the predecessor's optimal acceleration sequence in (\ref{eq:lowerconst2}). So the explicit feedback gain of the standard finite-time LQR, which are derived by solving the discrete algebraic Riccati equation, is not the exact solution of the FTUOC. However, we can slightly modify the batch approach in \cite{borrelli2017predictive} and yield a formula for the optimal control sequence as a function of the initial state $x_{i,0}$ and the predecessor's optimal acceleration sequence $\Tilde{A}_{i-1}$, details as follow.

Given the initial state $x_{i,0}$ and the predecessor's optimal acceleration sequence, all future states can be explicitly expressed as a function of the future inputs:
\begin{equation} \label{OC model}
\Tilde{X}_i=S_xx_{i,0}+S_u\Tilde{\Gamma}_i+S_a\Tilde{A}_{i-1}
\end{equation}

\noindent where $\Tilde{X}_i=
\begin{bmatrix}
x_{i,0} \\
x_{i,1}\\
x_{i,2}\\
\vdots\\
x_{i,N_p}
\end{bmatrix},
S_x=\begin{bmatrix}
    I\\
    A\\
    A^2\\
    \vdots\\
    A^{N_p}
\end{bmatrix},
\Tilde{\Gamma_i}=\begin{bmatrix}
    \gamma_{i,0}\\
    \gamma_{i,1}\\
    \gamma_{i,2}\\
    \vdots\\
    \gamma_{i,N_p}
\end{bmatrix},$
$
S_u=\begin{bmatrix}
    0 & 0 & \cdots & 0 & 0\\
    B & 0 & \cdots & 0 & 0\\
    AB & B & \cdots & 0 & 0\\
    \vdots & \vdots &\ddots & \vdots & 0\\
    A^{N_p-1}B & A^{N_p-2}B & \cdots & B & 0
\end{bmatrix},
S_a=\begin{bmatrix}
    0 & 0 & \cdots & 0 & 0\\
    D & 0 & \cdots & 0 & 0\\
    AD & D & \cdots & 0 & 0\\
    \vdots & \vdots &\ddots & \vdots & \vdots\\
    A^{N_p-1}D & A^{N_p-2}D & \cdots & D & 0
\end{bmatrix},
\Tilde{A}_{i-1} = \begin{bmatrix}
    a_{i-1,0}^*\\
    a_{i-1,1}^*\\
    a_{i-1,2}^*\\
    \vdots\\
    a^*_{i-1,N_{p}}\\
\end{bmatrix}
$.

With $\Tilde{Q}=blockdiag\{Q_{lon},Q_{lon},\cdots,\beta Q_{lon}\}, \Tilde{R}=[R_{lon},R_{lon},\cdots,\beta R_{lon}]$, substituting (\ref{OC model}) into the cost function (\ref{eq:lowercost}):

\begin{equation} \label{OC cost}
J_i(x_{i,0},\Tilde{\Gamma_i})=\Tilde{\Gamma_i}^TH\Tilde{\Gamma_i}+2x_{i,0}^TF\Tilde{\Gamma_i}+2\Tilde{A}_{i-1}^TL\Tilde{\Gamma_i}+f_J(x_{i,0},\Tilde{A}_{i-1})
\end{equation}

\noindent where $H=(S_u^T\Tilde{Q}S_u+\Tilde{R}), F=S_x^T\Tilde{Q}S_u$, and $L=S_a^T\Tilde{Q}S_u$, $f_J(x_{i,0},\Tilde{A}_{i-1})$ is omitted as it does not effect the rest of the deduction.

Taking the gradient of (\ref{OC cost}) and setting it to zero, we obtain the optimal control sequence as:
\begin{equation}\label{explicit solution}
\Gamma_i^*=-H^{-1}F^Tx_{i,0}-H^{-1}L^T\Tilde{A}_{i-1}
\end{equation}

As only the first control input is executed at each time step, we can obtain the linear feedback and feedforward gain from the formula of the first element of $\Gamma_i^*$:
\begin{equation}\label{linear gain}
    \gamma_{i,0}^*=K_bx_{i,0}+K_f\Tilde{A}_{i-1}
\end{equation}

\noindent where $K_b=[k_{\Delta d}~{} k_{\Delta v}~{} k_a$] is the linear feedback gain, $k_{\Delta d}, k_{\Delta v}$ and $k_a$ are the gains corresponding to spacing deviation, speed difference, and acceleration, respectively. For the linear feedforward gain, $K_f=[k_{f,0},k_{f,1},\cdots,k_{f,N_p}]$, where $k_{f,j}$ is the gain corresponding to $a_{i-1,j}$ for $j=0,1,\cdots,N_p$. 

Based on the explicit discrete linear feedback and feedforward gain in (\ref{linear gain}), given that the discretization time step is sufficiently small, we have $a_{i-1,j}\approx a_{i-1,0}$ for $j=1,2,\cdots,N_p$ and the discrete gains $K_b$ and $K_f$ can be approximately treated as the continuous gains \cite{kleinman1977continuous,melzer1971sampling}. Therefore, for the continuous system (\ref{eq:cd1})-(\ref{eq:cd3}), the explicit solution of the optimization problem (\ref{eq:lowercost})-(\ref{eq:lowerconst9}) is given as:
\begin{equation} \label{continuous linear gain}
\begin{split}
    \gamma_i(t)&=g_i(T_{des,i}(t), T_i(t), v_i(t))\\
    &=K_b x_{i}(t) + k_fa_{i-1}(t)
\end{split}
\end{equation}

\noindent where $\gamma_i(t)$ is the continuous control input of CAV$i$, and $k_f=\sum_{j=0}^{N_p} k_{f,j}$.

With the explicit formula of the continuous control input $\gamma_i(t)$, we can derive the conditions on the feedback and feedforward gains that make the CAV platoon $l_2$ norm string stable. By the Parseval theorem \cite{10.1112/plms/s2-45.1.458}, energy is the same in the frequency and time domain, and $\|G_i(s)\|_{h_\infty}=\sup_\omega(\frac{\|\Delta d_i(s)\|_2}{\|\Delta d_{i-1}(s)\|_2})\leq1$ is a sufficient condition for $l_2$ norm string stable \cite{zhou2019distributed}, where $\Delta d_i(s)$ is the Laplace transform of $\Delta d_i(t)$ and $G_i(s)$ is the transfer function, $\|G_i(s)\|_{h_\infty}$ is the $h$ infinity norm of the transfer function.

\begin{prop}
    The proposed lower-level longitudinal MPC controller is $l_2$ norm string stable when the constraints (\ref{eq:lowerconst3})-(\ref{eq:lowerconst9}) and the safety cost are inactive and the following condition holds:
    \begin{equation*}
    p^2-q\leq0
    \end{equation*}
    
    \noindent or: 
    \begin{equation*}
    \begin{tabular}{l}
    $\left\{\begin{array}{l}
    p^2-q>0 \\
    \frac{-p\pm\sqrt{p^2-q}}{2}\leq0 \\
    \end{array} \right.$
    \end{tabular}
    \end{equation*}
    
    \noindent where $p=k_a^2-k_f^2-2k_{\Delta v}$, $q=8k_{\Delta d}(k_a+k_f)$.
\end{prop}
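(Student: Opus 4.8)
The plan is to turn the proposed control law into a closed-loop transfer function between consecutive spacing deviations, reduce the $h_\infty$-norm bound to the positivity of a quartic in $\omega$, and then read off the conditions on $p$ and $q$. First I would substitute the continuous control input (\ref{continuous linear gain}) into the platoon dynamics (\ref{eq:cd1})--(\ref{eq:cd3}), using $\dot a_i(t)=\gamma_i(t)$, and take Laplace transforms with zero initial conditions, obtaining $s\,\Delta d_i(s)=\Delta v_i(s)$, $s\,\Delta v_i(s)=a_{i-1}(s)-a_i(s)$, and $s\,a_i(s)=k_{\Delta d}\Delta d_i(s)+k_{\Delta v}\Delta v_i(s)+k_a a_i(s)+k_f a_{i-1}(s)$. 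Eliminating $\Delta v_i(s)$ and $a_i(s)$ from these three relations expresses $a_{i-1}(s)$ as a rational multiple of $\Delta d_i(s)$; writing the analogous relation for CAV $i-1$ and combining it with $a_{i-1}(s)=a_{i-2}(s)-s^2\Delta d_{i-1}(s)$ (using that the gains $K_b,k_f$ are common to every vehicle of the homogeneous virtual platoon) yields
\begin{equation*}
G_i(s)=\frac{\Delta d_i(s)}{\Delta d_{i-1}(s)}=\frac{k_f s^2+k_{\Delta v}s+k_{\Delta d}}{s^3-k_a s^2+k_{\Delta v}s+k_{\Delta d}}.
\end{equation*}

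Next I would impose $|G_i(j\omega)|^2\le 1$ for all $\omega$, which by the discussion preceding the proposition is sufficient for $l_2$ string stability. Splitting numerator and denominator into real and imaginary parts gives $|G_i(j\omega)|^2=\frac{(k_{\Delta d}-k_f\omega^2)^2+k_{\Delta v}^2\omega^2}{(k_{\Delta d}+k_a\omega^2)^2+\omega^2(k_{\Delta v}-\omega^2)^2}$. Cross-multiplying, cancelling the common $k_{\Delta d}^2$ and $k_{\Delta v}^2\omega^2$, and factoring out $\omega^2$ (the frequency $\omega=0$ always gives $|G_i(0)|=1$) reduces the requirement to
\begin{equation*}
\omega^4+(k_a^2-k_f^2-2k_{\Delta v})\,\omega^2+2k_{\Delta d}(k_a+k_f)\ \ge\ 0 \qquad\text{for all } \omega\ge 0 .
\end{equation*}

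Finally, setting $\xi=\omega^2\ge 0$ and noting $p=k_a^2-k_f^2-2k_{\Delta v}$ and $2k_{\Delta d}(k_a+k_f)=q/4$, the requirement becomes $h(\xi):=\xi^2+p\xi+\tfrac{q}{4}\ge 0$ for all $\xi\ge 0$, whose discriminant is $p^2-q$. If $p^2-q\le 0$, the upward parabola $h$ is nonnegative everywhere, giving the first case of the statement; if $p^2-q>0$, then $h$ has the real roots $\tfrac{-p\pm\sqrt{p^2-q}}{2}$ and is negative precisely between them, so $h\ge 0$ on $[0,\infty)$ if and only if both roots are nonpositive, i.e. $\tfrac{-p\pm\sqrt{p^2-q}}{2}\le 0$, which is the second case. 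I expect the transfer-function step to be the main obstacle: one has to eliminate the intermediate states while carrying the feedforward term $k_f a_{i-1}$ correctly, and justify that the spacing-deviation propagation ratio is well defined and the same for every follower because the feedback/feedforward gains derived in (\ref{linear gain})--(\ref{continuous linear gain}) are identical along the platoon; the ensuing algebra and the quadratic case split are then routine. (It is also worth remarking that $\|G_i\|_{h_\infty}$ is meaningful only once $G_i$ is stable, which is inherited from the local-stability conditions of Section \ref{sec4.3}.)
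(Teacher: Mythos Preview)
Your proposal is correct and follows essentially the same approach as the paper: derive the spacing-deviation transfer function $G_i(s)=\dfrac{k_f s^2+k_{\Delta v}s+k_{\Delta d}}{s^3-k_a s^2+k_{\Delta v}s+k_{\Delta d}}$, enforce $|G_i(j\omega)|\le 1$, reduce to a quadratic in $W=\omega^2$, and case-split on its discriminant $p^2-q$. The only cosmetic difference is that the paper obtains (\ref{laplace 2}) by subtracting $\gamma_{i-1}-\gamma_i$ and substituting $a_{i-1}-a_i=s^2\Delta d_i$, $a_{i-2}-a_{i-1}=s^2\Delta d_{i-1}$, whereas you eliminate $a_i,a_{i-1}$ from the single-vehicle relations and then chain via $a_{i-1}=a_{i-2}-s^2\Delta d_{i-1}$; both routes yield the same transfer function and the subsequent algebra and root analysis are identical.
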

\begin{proof}
    By (\ref{continuous linear gain}), the control input difference between CAV$i$ and its predecessor follows the relation:
    \begin{equation} \label{laplace 1}
    \begin{split}
        \gamma_{i-1}(t) - \gamma_i(t) =& K_b(x_{i-1}(t)-x_i(t))\\
        &+ k_f(a_{i-2}(t) - a_{i-1}(t))
    \end{split}
    \end{equation}
    
    Taking the Laplace transform of (\ref{laplace 1}) and rearranging it:
    \begin{equation} \label{laplace 2}
    \begin{split}
        &s^3 \Delta d_i(s) - k_a s^2 \Delta d_i(s) + k_{\Delta d} \Delta d_i(s) + k_{\Delta v} s \Delta d_i(s) \\
        &= k_f s^2 \Delta d_{i-1}(s) + k_{\Delta v} s \Delta d_{i-1}(s) + k_{\Delta d} \Delta d_{i-1}(s)  
    \end{split}
    \end{equation}

    Based on (\ref{laplace 2}), the transfer function is obtained by:
    \begin{equation*}
        G_i(s) = \frac{\Delta d_i(s)}{\Delta d_{i-1}(s)} = \frac{k_f s^2 + k_{\Delta v} s + k_{\Delta d}}{s^3 - k_a s^2 + k_{\Delta v} s + k_{\Delta d}}
    \end{equation*}

    The magnitude of the transfer function can be calculated by substituting $s=j\omega$:
    \begin{equation*}
        |G_i(j\omega)| = \sqrt{\frac{(-\omega^2 k_f + k_{\Delta d})^2 + \omega^2 k_{\Delta v}^2}{(\omega^2 k_a + k_{\Delta d})^2 + (-\omega^3 + k_{\Delta v}\omega)^2}}
    \end{equation*}
    Regulating $\|G_i(j\omega)\|_{h_\infty}\leq1$ yields:
    \begin{equation} \label{laplace 5}
        \omega^6 + (k_a^2 - k_f^2 - 2k_{\Delta v}) \omega^4 + 2k_{\Delta d}(k_a + k_f) \omega^2 \geq 0
    \end{equation}

    Note that $\omega$ here is the frequency of oscillation, and $\omega\neq0$. Dividing both side of (\ref{laplace 5}) with $\omega^2$ and conducting a change of variables yields an equivalent inequality expression:
    \begin{equation}\label{laplace 6}
        W^2 + (k_a^2 - k_f^2 - 2k_{\Delta v})W + 2k_{\Delta d}(k_a + k_f) \geq 0
    \end{equation}
    where $W=\omega^2>0$, the left-hand-side is a quadratic function of $W$ with a positive leading coefficient, to satisfy (\ref{laplace 6}), the quadratic function should either have no root, 1 root, or two non-positive roots.

    Defining $p=k_a^2-k_f^2-2k_{\Delta v}$ and $q=8k_{\Delta d}(k_a+k_f)$, based on the roots of quadratic equation, when there exists no root or only one root:
    \begin{equation} \label{string stable condition1}
    p^2-q\leq0
    \end{equation}
    
    \noindent when there are two non-positive roots: 
    \begin{equation} \label{string stable consition2}
    \begin{tabular}{l}
    $\left\{\begin{array}{l}
    p^2-q>0 \\
    \frac{-p\pm\sqrt{p^2-q}}{2}\leq0 \\
    \end{array} \right.$
    \end{tabular}
    \end{equation}
\end{proof}

The conditions (\ref{string stable condition1}) and (\ref{string stable consition2}) for the CAV platoon to be $l_2$ norm stable are visualized in the $p$-$q$ space, shown in Fig. \ref{fig:string region}. The boundary of the two regions consist of two parts: $q=p^2$ for $p\leq0$, and $q=0$ for $p>0$.
\begin{figure}[h]
\centering
   {\includegraphics[width=0.25\textwidth]{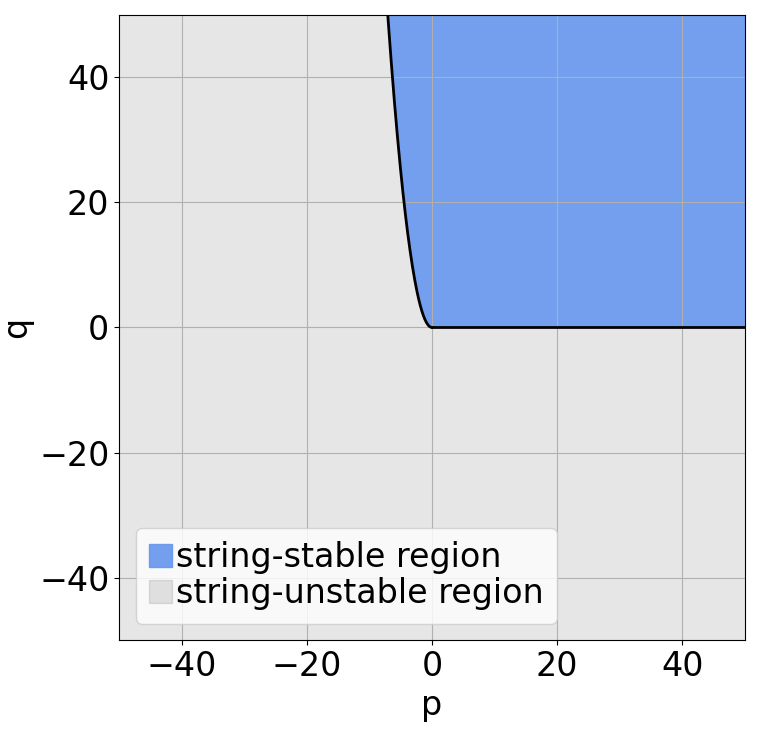}
    \caption{String-stable and string-unstable regions in $p$-$q$ space}
    \label{fig:string region}}
\end{figure}

\subsection{Initial Feasibility Analysis} \label{sec4.5}
As stated in Remark \ref{remark asymptotic}, the proof of asymptotic local stability implicitly assumes initial feasibility. In this subsection, initial feasibility analysis of the controller proposed in subsection \ref{sec4.2} is presented. Specifically, definition of initial feasible set of MPC is introduced, accompanied by a comparison of the proposed controller's initial feasible set with other commonly used local-stable MPC controllers.

We first introduce the definition of precursor set, which is an important concept for the initial feasible set of MPC \cite{borrelli2017predictive}:
\begin{defn}
    For an autonomous system $x(t+1)=f(x(t),\gamma(t))$ subjected to constraints $x(t)\in \chi_{b}$, $\gamma(t)\in G_b$, $\forall t\geq 0$, the precursor set to the set $S$ is denoted as
    $Pre(S)=\{x\in \mathbb{R}^n: \exists\gamma\in G_b ~{}s.t.~{} g(x,u)\in S\}$.
\end{defn}
\noindent $Pre(S)$ is the set of states which can be driven into the
target set $S$ in one time step while satisfying input constraints.

Another definition that is closely related to the initial feasibility of MPC is the N-step controllable set\cite{borrelli2017predictive}:

\begin{defn}
    For an autonomous system $x(t+1)=g(x(t),\gamma(t))$ subjected to constraints $x(t)\in \chi_b$, $\gamma(t)\in G_b$, $\forall t\geq 0$, for a given set $S\in \chi_b$, the N-step controllable set $\kappa_N(S)$ is defined recursively as:
    \begin{equation} \label{N-step controllable}
    \kappa_j(S)=Pre(\kappa_{j-1}(S))\cap \chi_b, \kappa_0(S)=S
    \end{equation}
\end{defn}
\noindent All states $x_0$ of the autonomous system belonging to the N-Step Controllable
Set $\kappa_N(S)$ can be driven, by a suitable control sequence, to the target set S in N
steps, while satisfying input and state constraints\cite{borrelli2017predictive}.

Now, we can define the initial feasible set \cite{borrelli2017predictive}:
\begin{defn}
    For the following problem:
    \begin{equation*}
    \begin{split}
    J^*(x(0))&=min_{\Gamma} J(x(0),\Gamma)  \\
    \text{subject to}~{}~{} &x_{k+1}=f(x_k,\gamma_k), ~{}k=0,\cdots,N_p-1\\
    &x_k\in \chi_b,~{}\gamma\in G_b,~{}k=0,\cdots,N_p-1\\
    &x_{N_p}\in \chi_f\\
    &x_0=x(0)
    \end{split}
    \end{equation*}
    \noindent where $N_p$ is the prediction horizon, $\chi_f$ is the terminal constraint set. The initial feasible set is defined as:
    
    $\chi_0=\{x_0\in \chi_b:~{}\exists(\gamma_0,\cdots,\gamma_{N_p-1})$ such that $x_k\in \chi_b,~{}\gamma_k\in G_b,~{}k=0,\cdots,N_p-1,~{}x_{N_p}\in \chi_f$ where $x_{k+1}=f(x_k,u_k),~{}k=0,\cdots,N_p-1\}$.
\end{defn}

A control sequence $(\gamma_0,\cdots,\gamma_{N_p-1})$ can only be found, (i.e., the control problem is feasible), if $x(0)\in \chi_0$. The initial feasible set can be calculated using the N-step controllable set \cite{borrelli2017predictive}: $\chi_0=\kappa_{N_p}(\chi_f)$.

Based on the definition of initial feasible set, we compare the initial feasible set of the proposed MPC controller with those of other commonly used local-stable MPC controllers (i.e., MPC controllers with zero-terminal constraint and with invariant set-terminal constraints). Without loss of generality, for simplicity and to effectively illustrate the comparison results we investigate the a scenario where the leading vehicle CAV1 runs on the mainline with a constant speed, and its follower CAV2 runs on the on-ramp. The shape of the initial feasible set $\chi_0$ depends on the state boundary $\chi_b$, the control boundary $\Gamma_b$, the controller horizon $N_p$, and the terminal set $\chi_f$\cite{borrelli2017predictive}. We keep $\chi_b$, $G_b$, and $N_p$ the same for all three controllers to show how the proposed controller enhances feasibility with the designed terminal constraints (\ref{eq:lowerconst8}) and (\ref{eq:lowerconst9}). Consequently, this demonstration exemplifies the proposed controller's practicality for the on-ramp merging scenario, particularly in situations where the initial spacing deviation of CAVs tends to be large. Specifically, the boundaries and prediction horizon of CAV2 are set as: $\Delta d_{2,k}\in[-30,30]$, $\Delta v_{2,k}\in[-3,3]$, $a_{2,k}\in[-3,3]$, $\gamma_{2,k}\in[-5,5]$, $k=0,1,\cdots,N_p$, $N_p=10$.

The precursor set here is calculated following\cite{borrelli2017predictive}:
\begin{equation*}
    Pre(S)=(S\oplus(-B)\circ G_b)\circ A
\end{equation*}
where the symbol $\oplus$ represents the Minkowski sum and the symbol $\circ$ represents the linear transformation of sets. However, the detailed deduction of this calculation extends beyond the scope of this paper. For those interested in the calculation details of the precursor set and Minkowski sum, we refer readers to \cite{borrelli2017predictive,halperin2002robust}. The comparison of the initial feasible set at $t=0$ of CAV2 of the zero-terminal constraint MPC \cite{dunbar2011distributed}, the invariant set-terminal constraint MPC \cite{zhou2019distributed}, and the proposed MPC is shown in Fig. \ref{fig:feasible set}.

\begin{figure}[h]
    \centering
    \setlength{\abovecaptionskip}{0pt}
    \subfigure[]{\includegraphics[width=0.2\textwidth]{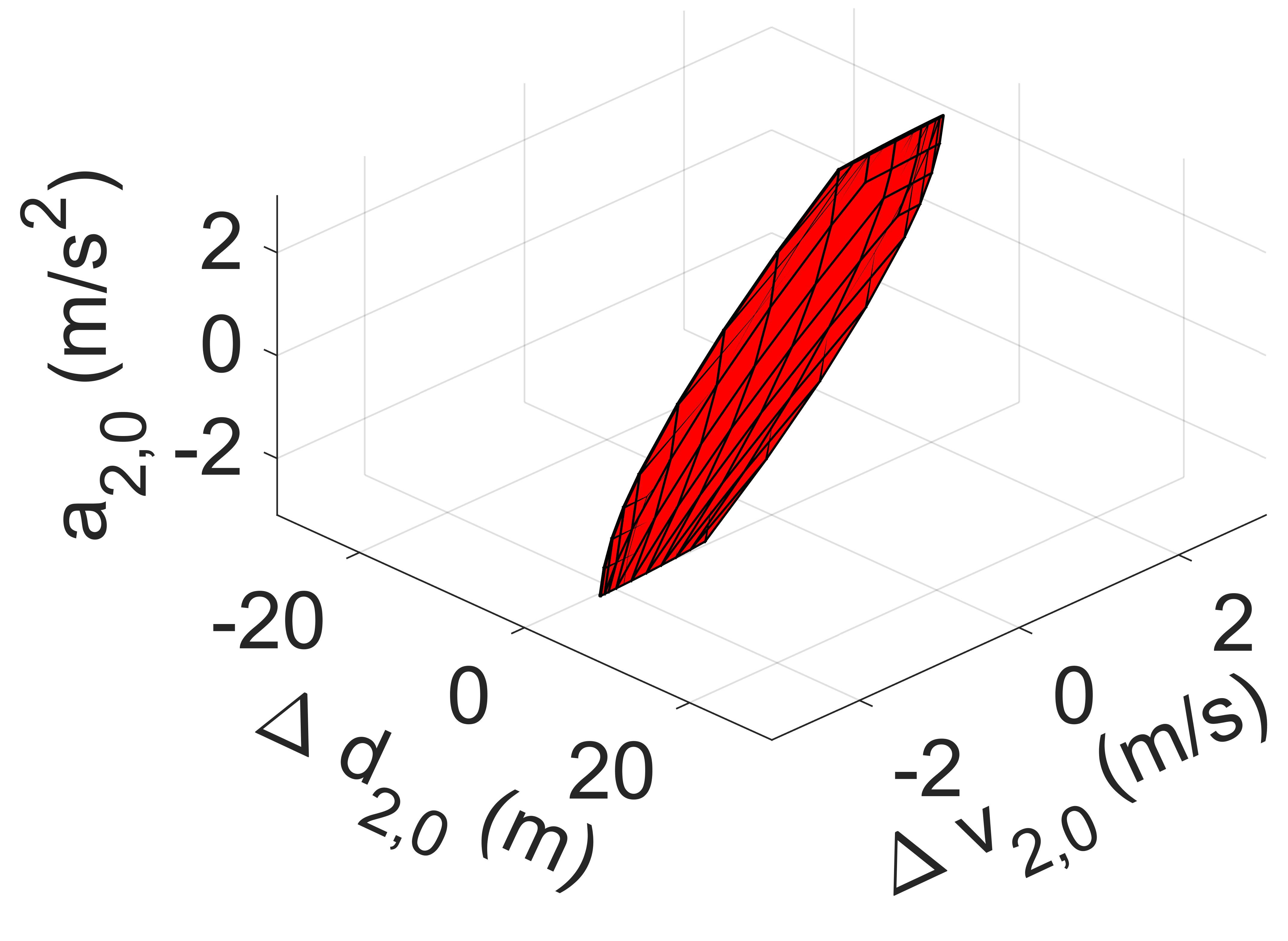}}
    \subfigure[]{\includegraphics[width=0.2\textwidth]{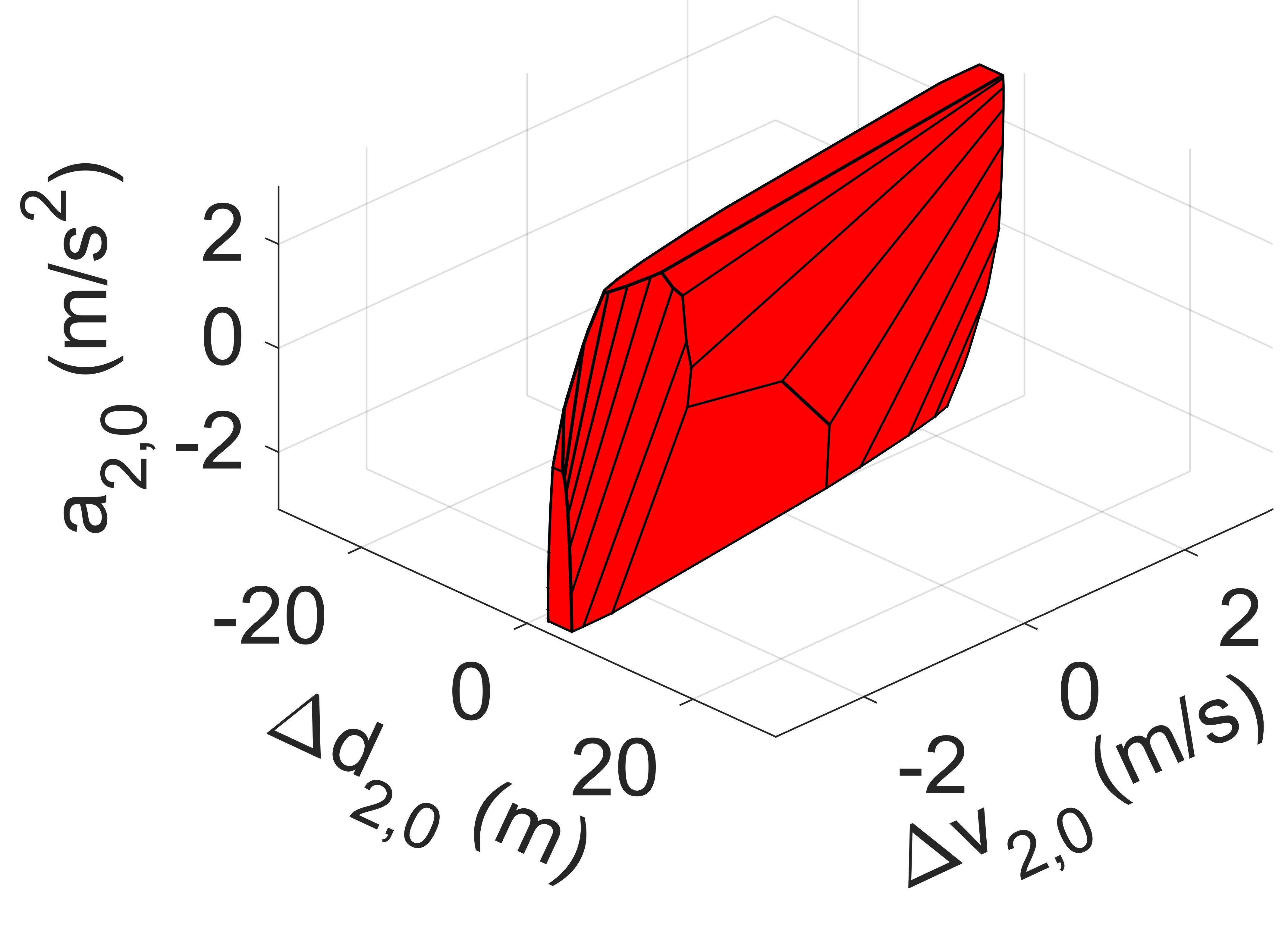}}
    \subfigure[]{\includegraphics[width=0.2\textwidth]{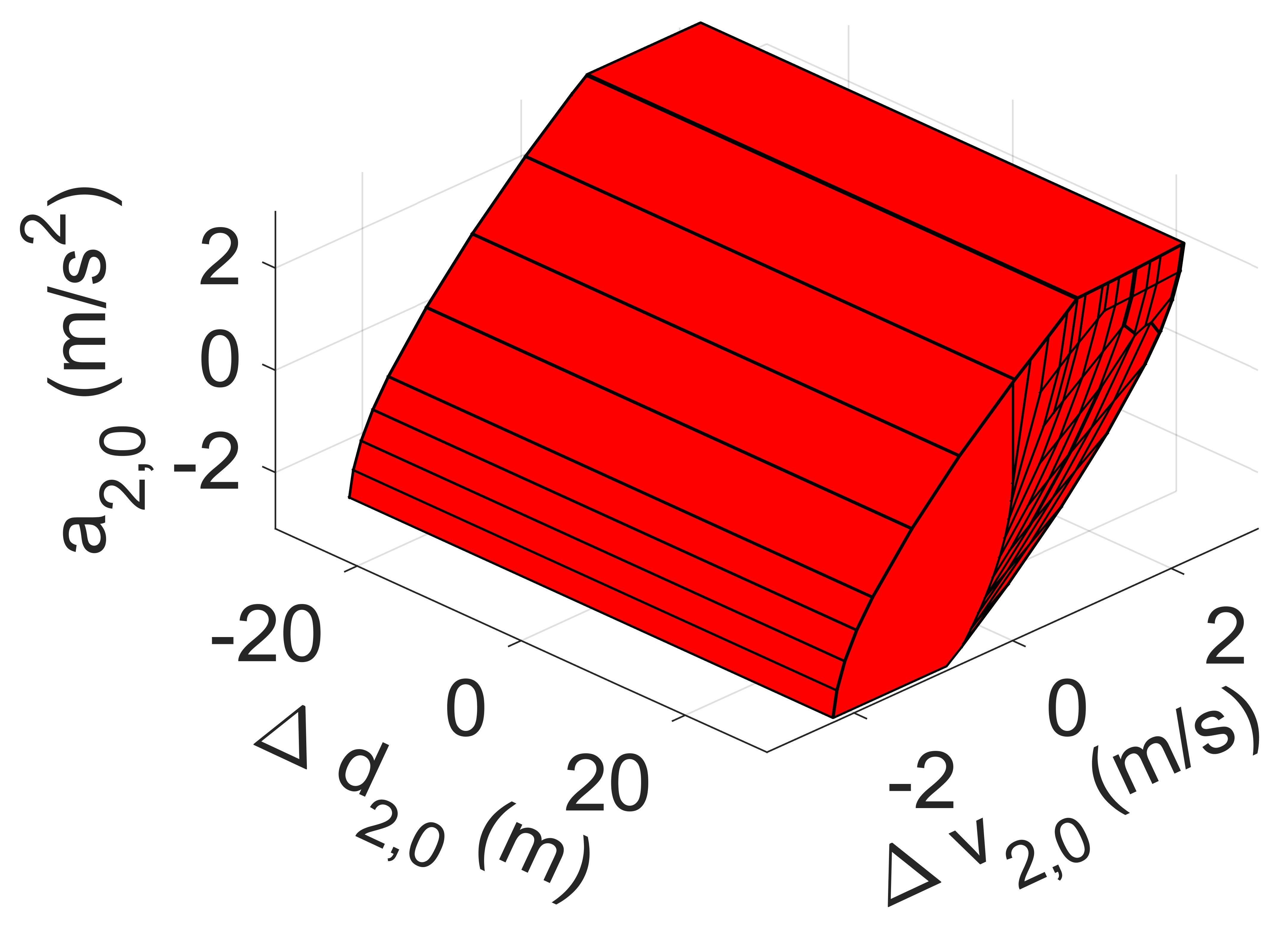}}
    \caption{Comparison of initial feasible sets of local-stable MPC controllers with different terminal constraints. (a) Zero-terminal constrained MPC. (b) Invariant set-terminal constrained MPC. (c) Proposed MPC.}
    \label{fig:feasible set}
\end{figure}

From Fig. \ref{fig:feasible set}, it is observed that the MPC with zero-terminal constraint possesses the smallest initial feasible set, this is due to the fact that its recursive calculation (\ref{N-step controllable}) starts at the origin point. While the initial feasible set of the invariant set-terminal constraint MPC is relatively large, it nevertheless imposes restrictions on all three states. Compared to Fig. \ref{fig:feasible set}(b), the initial feasible set of the proposed MPC, despite having slightly more restrictions on $\Delta v_{2,0}$ and $a_{2,0}$, is considerably expanded on the $\Delta d_{2,0}$ axis. This expansion renders it far more practical for on-ramp merging scenarios, where initial spacing deviation can be exceptionally large.  As highlighted in subsection \ref{sec4.2}, constraint (\ref{eq:lowerconst3}) is employed purely for theoretical boundedness of $\Delta d_{i,k}$. Therefore, by reducing $\Delta d_{min}$ and increasing $\Delta d_{max}$, the initial feasible set can be expanded further along the $\Delta d_{2,0}$ axis without affecting the controller's performance.

\section{LOWER-LEVEL LATERAL CONTROLLER\label{sec5}}
To tackle real-world on-ramp merging, it is crucial to consider curvy ramps and ensure CAVs stay on the centerline with minimal deviation. Therefore, we propose a lower-level lateral controller based on a linear MPC formulation that tracks the centerline and regulates lateral deviation and heading angle deviation. In subsection \ref{sec5.1}, we introduce the kinematic bicycle model. Subsection \ref{sec5.2} presents the optimization formulation for the MPC lateral controller.

\subsection{Kinematic Bicycle Model\label{sec5.1}}
The bicycle model characterizes the continuous nonlinear kinematic behavior of each CAV as follows:
\begin{equation} \label{eq:bycicle}
\dot{\boldsymbol{\chi}} = \begin{bmatrix} 
\dot{X} \\ 
\dot{Y} \\ 
\dot{\theta} 
\end{bmatrix} = \begin{bmatrix} 
v \cos \theta \\
v \sin \theta \\
v \tan \delta/L \end{bmatrix}=
f_n(\chi,\mu)
\end{equation}

\noindent where $\chi= \begin{bmatrix} 
X \ 
Y \
\theta
\end{bmatrix}^T$ represent the state of the CAV, $X$ and $Y$ denote the coordinates of the rear axle's center, $\theta$ signifies the heading angle relative to the $X$-axis. Note that $\chi$ and $X$ here are new variables and differ from the variables from Section \ref{sec4} The control input is given by $\mu=\begin{bmatrix} 
v \ 
\delta 
\end{bmatrix}^T$, where $v$ is the longitudinal velocity and $\delta$ is the steering angle of the CAV. $L$ denotes the CAV's wheelbase.

Using Taylor expansion and neglecting the higher-order terms, we linearize the nonlinear bicycle model around a feasible trajectory:
\begin{equation}\label{eq:linearization}
\begin{split}
\dot{{\chi}} &\approx 
{f}_l({\chi},{\mu},{\chi}_s,{\mu}_s)
\end{split}
\end{equation}

\noindent where $\mu_s=\begin{bmatrix}v_s \ \delta_s\end{bmatrix}^T$ represents a feasible control sequence, and $\chi_s=\begin{bmatrix}X_s \ Y_s \ \theta_s\end{bmatrix}^T$ is the corresponding sequence of states. Further, by employing Euler discretization, we obtain the linear discrete bicycle model:
\begin{equation}\label{eq:dbicycle}
\begin{split}
\chi(k+1)=\chi(k)+{f}_l({\chi(k)},{\mu(k)},{\chi}_s(k),{\mu}_s(k))*T_s
\\=A_d(k)\chi(k)+B_d(k)\mu(k)+D_d(k)
\end{split}
\end{equation}

\noindent where $T_s$ is the time interval between each step, which is consistent with the interval of the lower-level longitudinal controller in section \ref{sec4}. We have matrices $A_d(k)=\begin{bmatrix} 1 & 0 & -v_s(k)T_s\sin(\theta_s(k)) \\ 0 & 1 & v_s(k)T_s\cos(\theta_s(k)) \\ 0 & 0 & 1 \end{bmatrix}$, $B_d(k)=\begin{bmatrix}
T_s\cos{\theta_s (k)} &0 \\
T_s\sin{\theta_s (k)} &0 \\
\frac{T_s\tan{\delta_s (k)}}{L} & \frac{T_sv_s (k)}{L\cos^2{\delta_s (k)}}
\end{bmatrix}$, $D_d(k)=\begin{bmatrix}
T_sv_s(k)\sin(\theta_s(k))X_s(k) \\
-T_sv_s(k)\cos(\theta_s(k))Y_s(k) \\
-\frac{T_sv_s(k)\delta_s(k)}{L\cos^2(\delta_s(k))} \
\end{bmatrix}$.

\subsection{Optimization Formulation\label{sec5.2}}
After the longitudinal controller generates a solution, the lower-level lateral controller is executed on each CAV locally to regulate its lateral deviation and heading angle deviation. The lateral controller utilizes the predicted velocity from the longitudinal controller to more accurately predict the future states of the CAV. Note that due to this dependency, the lateral controller's prediction horizon should be equal or shorter than the longitudinal controller's. For the sake of simplicity, we keep the two prediction horizon the same here.

The optimization problem solved by the MPC controller at each time step is formulated as follows:
\begin{equation}\label{eq:latcost}
\begin{split}
J_{lat}^*(\chi_t, M) = \min_{\mu(0) \rightarrow \mu(N_p)} \sum_{k=0}^{N_p} \mu^T(k) R_{lat} \mu(k) \\ + (\chi(k)-\chi_{ref}(k))^T Q_{lat} (\chi(k)-\chi_{ref}(k))
\end{split}
\end{equation}
\noindent subject to:
\begin{equation}\label{eq:latconst1}
\chi(0)=\chi_t
\end{equation}
\begin{equation}\label{eq:latconst2}
\begin{split}
\chi(k+1)=A_d(k)\chi(k)+B_d(k)\mu(k)+D_d(k)\\
,~{}k\in \{0,1,\cdots,N_p-1\}
\end{split}
\end{equation}
\begin{equation}\label{eq:latconst3}
\delta_{min} \leq \delta(k) \leq \delta_{max}
,~{}k\in \{0,1,\cdots,N_p\}
\end{equation}
\begin{equation}\label{eq:latconst4}
\Delta\delta_{min} \leq \Delta\delta(k) \leq \Delta\delta_{max}
,~{}k\in \{0,1,\cdots,N_p\}
\end{equation}
\begin{equation}\label{eq:latconst5}
v(k)=v_{lon}(k)
,~{}k\in \{0,1,\cdots,N_p\}
\end{equation}

\noindent where (\ref{eq:latcost}) represents the cost function, which penalizes the control inputs and the deviation of the states from the reference waypoints, with the weight matrices $R_{lat}$ and $Q_{lat}$, respectively. (\ref{eq:latconst1}) is the initial state constraint and (\ref{eq:latconst2}) is the discrete kinematic model constraint. (\ref{eq:latconst3}) and (\ref{eq:latconst4}) set boundaries for the steering angle and its increment, respectively. (\ref{eq:latconst5}) restricts the velocity input to be equal to the predicted velocity of the longitudinal controller to better predict the future CAV states.

\section{SIMULATION AND RESULTS\label{sec6}}
To demonstrate the proposed framework's performance, we've assessed its functionality through three numerical simulation scenarios. The local asymptotic stability characteristic and upper-level controller performance, the string stability characteristic, and the lower-level lateral controller performance are exemplified in the three scenarios, respectively. In all three scenarios, the road geometry is constructed with a $400m$ straight mainline, and an on-ramp including a $397.5m$ straight segment followed by a circular arc of $47.75m$ radius for merging into the mainline. Scenarios 1 and 2 primarily focus on the longitudinal controllers and scenario 3 examines the lateral controller.

The simulations are executed on a laptop equipped with an Intel Core i7-10870H processor, the simulation setup parameters are shown in Table \ref{tab:1}.

\begin{table}[h]
\centering
\scriptsize
\setlength{\abovecaptionskip}{0pt}
\caption{Simulation Parameters}
\label{tab:1}
\begin{minipage}[t]{0.45\linewidth}
    \centering
    \begin{adjustbox}{width=\linewidth}
    \begin{tabular}[t]{|>{\centering\arraybackslash}m{1.2cm}|>{\centering\arraybackslash}m{1.2cm}|}
        \hline
        \textbf{\textit{Parameter}} & \textbf{\textit{Value}} \\
        
        \hline
        $\eta_i$ & $0.8$ \\
        \hline
        $m_i$ & $1500kg$ \\
        \hline
        $r_i$ & $0.25m$ \\
        \hline
        $\tau_i$ & $0.4s$ \\
        \hline
        $f_{roll,i}$ & $0.015$ \\
        \hline
        $g$ & $9.8N/kg$ \\
        \hline
        $\rho$ & $1.2kg/m^3$ \\
        \hline
        $C_d$ & $0.25$ \\
        \hline
        $A_{v,i}$ & $2m^2$ \\
        \hline
        $T_s$ & $0.1s$ \\
        \hline
        $\Delta d_{min}$ & $-30m$\\
        \hline
        $\Delta d_{max}$ & $30m$\\
        \hline
        $\Delta d_{safe}$ & $5m$\\
        \hline
    \end{tabular}
    \end{adjustbox}
\end{minipage}
\hspace{0.01\linewidth}
\begin{minipage}[t]{0.45\linewidth}
    \centering
    \begin{adjustbox}{width=\linewidth}
    \begin{tabular}[t]{|>{\centering\arraybackslash}m{1.2cm}|>{\centering\arraybackslash}m{1.2cm}|}
        \hline
        \textbf{\textit{Parameter}} & \textbf{\textit{Value}} \\
        \hline
        $v_{min}$ & $0m/s$ \\
        \hline
        $v_{max}$ & $30m/s$\\
        \hline
        $a_{min}$ & $-5m/s^2$ \\
        \hline
        $a_{max}$ & $5m/s^2$ \\
        \hline
        $\gamma_{min}$ & $-5m/s^3$ \\
        \hline
        $\gamma_{max}$ & $5m/s^3$ \\
        \hline
        $L$ & $2.7m$ \\
        \hline
        $\delta_{min}$ & $-0.8rad$ \\
        \hline
        $\delta_{max}$ & $0.8rad$ \\
        \hline
        $\Delta\delta_{min}$ & $-0.04rad$ \\
        \hline
        $\Delta\delta_{max}$ & $0.04rad$ \\
        \hline
        $N_p$ & $12$ \\
        \hline
    \end{tabular}
    \end{adjustbox}
\end{minipage}
\end{table}

\subsection{Local Stability and Upper-level Controller Experiments\label{sec6.1}}

In this subsection, we examine both the local asymptotic stability of the lower-level longitudinal controller and the effectiveness of the proposed mixed-integer programming (MIP) upper-level controller. It is crucial to present the convergence of states and the efficiency of the trajectory evolution, hence a scenario is designed with large initial state deviations. Furthermore, different number of vehicles are placed on the two roads to verify the traffic density balancing property of the upper-level controller. To avoid overlaps between trajectories and provide a clear view of the trajectory evolution, a total number of 5 CAVs are generated. Note that, the number of vehicles does not impact the results since the upper-level controller is designed without explicit dependence on the number of vehicles and the asymptotic stability is proved for each vehicle. The MIP controller determines each CAV’s merging sequence and convey it to the CAV’s lower-level longitudinal controller to initiate the merging process. We compare the position and state evolution, along with quantitative criteria of the MIP controller’s merging process to a FIFO baseline, highlighting both the local stability characteristic and the proposed MIP controller’s superiority.

\begin{figure}[h]
    \centering
    \setlength{\abovecaptionskip}{0pt}
    \subfigure[]{\includegraphics[width=0.2\textwidth]{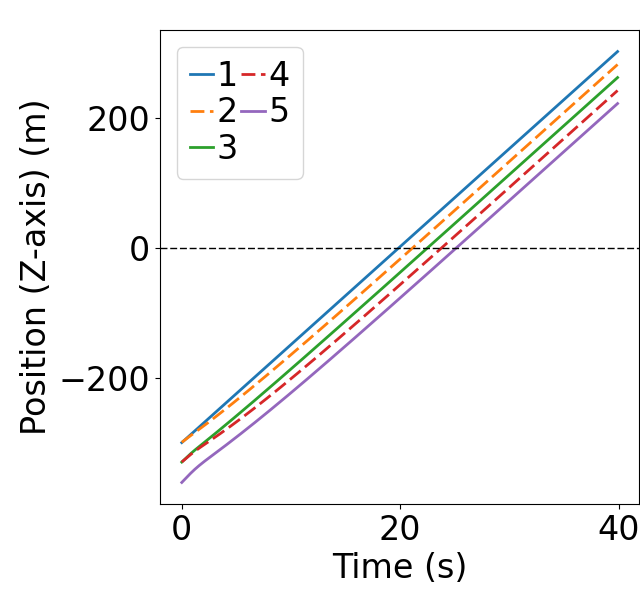}}
    \subfigure[]{\includegraphics[width=0.2\textwidth]{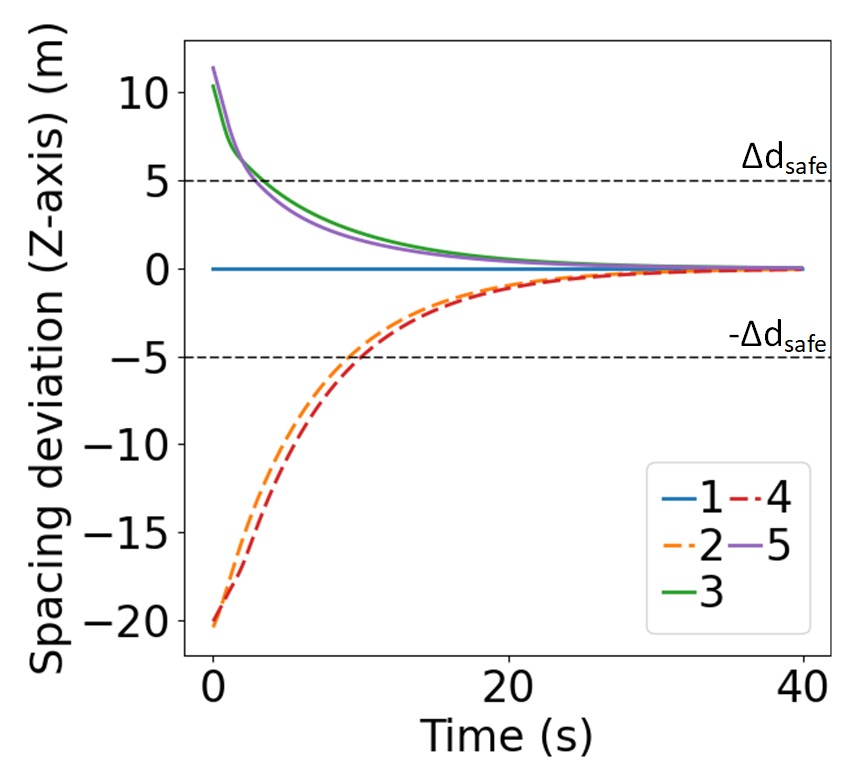}}
    \subfigure[]{\includegraphics[width=0.2\textwidth]{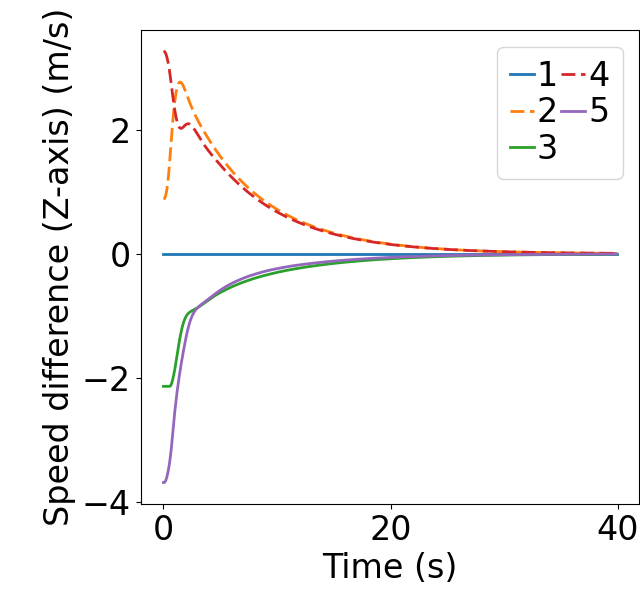}}
    \subfigure[]{\includegraphics[width=0.2\textwidth]{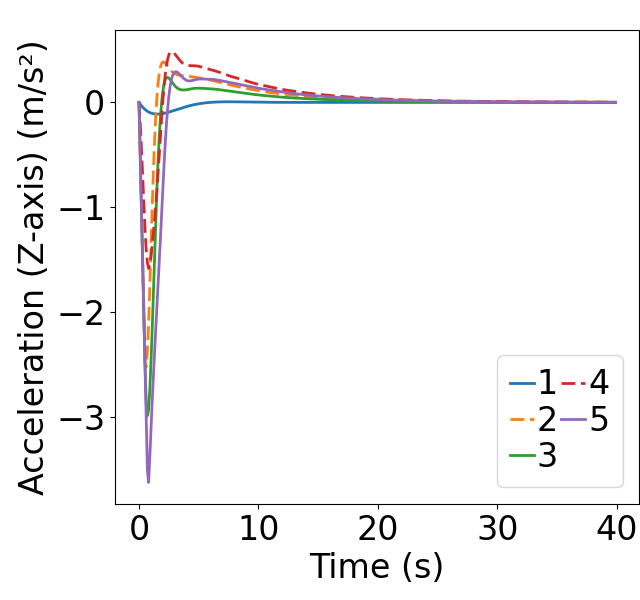}}
    \caption{CAV position and controller state evolution of experiment in scenario 1 with MIP. (a) Position evolution. (b) Spacing deviation evolution. (c) Speed difference evolution. (d) Acceleration evolution.}
    \label{fig:mip}
\end{figure}

\begin{figure}[h]
    \centering
    \subfigure[]{\includegraphics[width=0.2\textwidth]{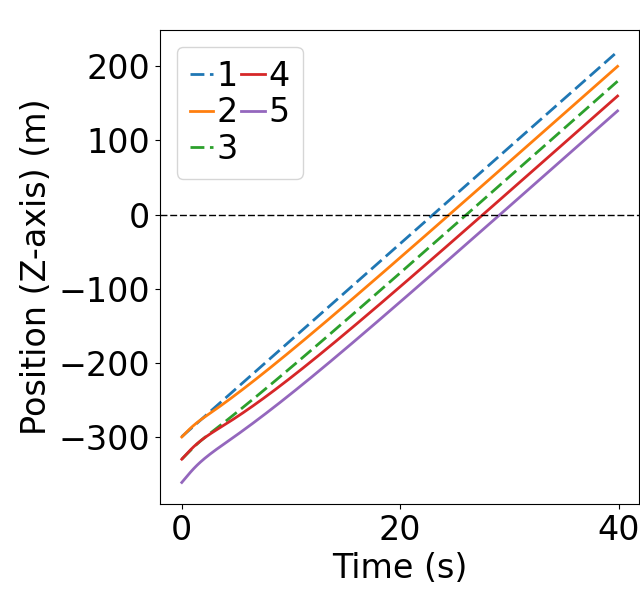}}
    \subfigure[]{\includegraphics[width=0.2\textwidth]{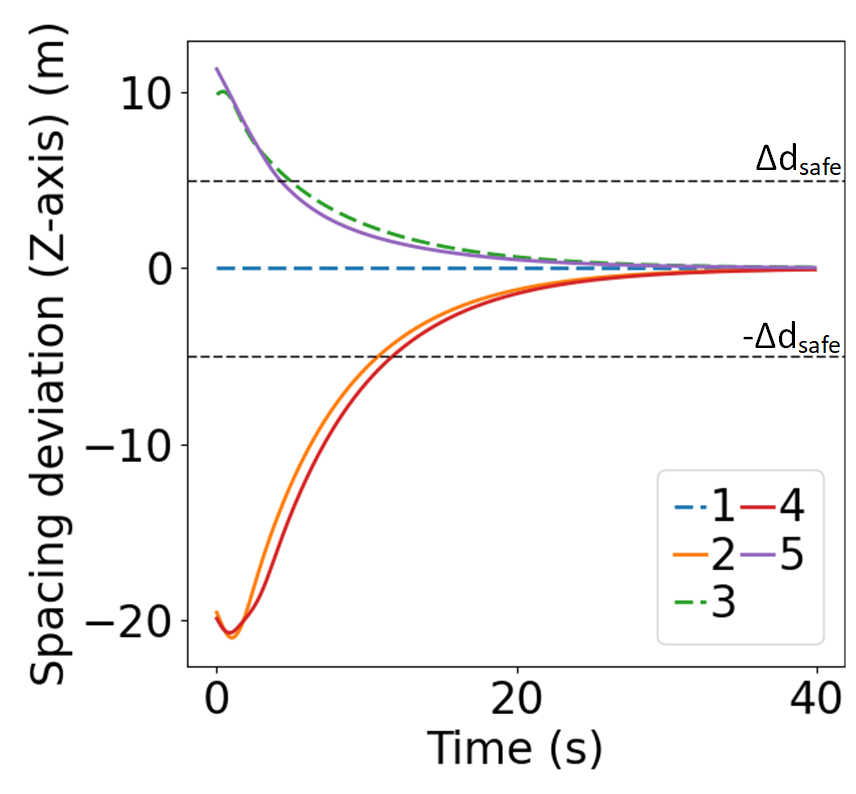}}
    \subfigure[]{\includegraphics[width=0.2\textwidth]{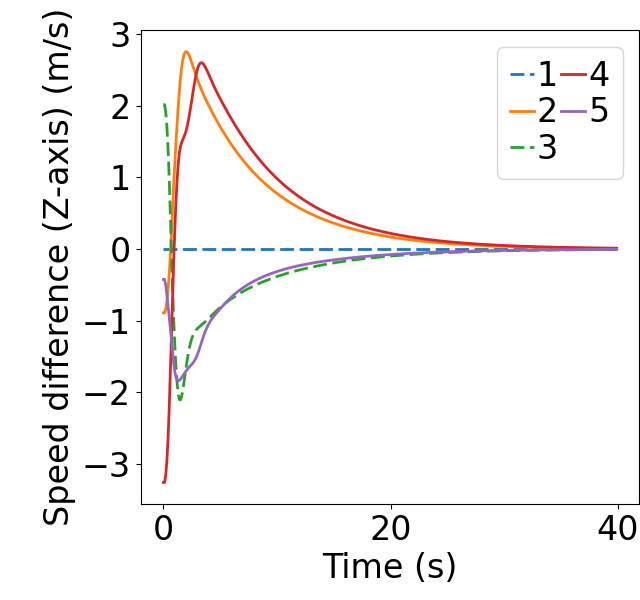}}
    \subfigure[]{\includegraphics[width=0.2\textwidth]{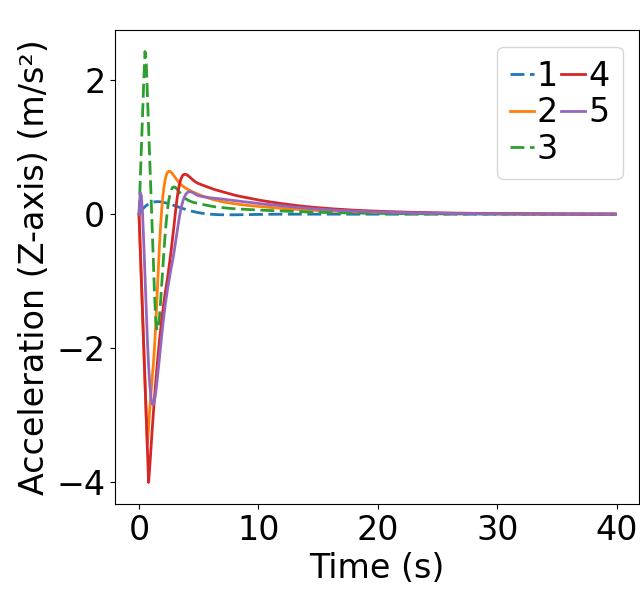}}
    \caption{CAV position and controller state evolution of experiment in scenario 1 with FIFO. (a) Position evolution. (b) Spacing deviation evolution. (c) Speed difference evolution. (d) Acceleration evolution.}
    \label{fig:fifo}
\end{figure}

In scenario 1, we simulate 3 mainline CAVs and 2 on-ramp CAVs. The leading CAVs on both lanes start at -300m relative to the virtual Z-axis, with a uniformly distributed deviation from -1$m$ to 1$m$. Successive CAVs are placed at 30$m$ intervals on both lanes, each with deviations ranging from -1$m$ to 1$m$. Initial velocities of the CAVs incrementally increase on the mainline and decrease on the on-ramp, starting from 15$m/s$, with uniformly distributed deviations between -0.8$m/s$ and 0.8$m/s$. Initial accelerations are randomly set between -0.5$m/s^2$ and 0.5$m/s^2$. The weights are selected as: $Q_{lon}=diag(0.01,0.02,0.01)$, $R_{lon}=0.01$, by setting $\varepsilon=0.5$, based on Proposition \ref{prop asymptotic 1}, we obtain the corresponding $\beta=1600$ to ensure asymptotic local stability.

Fig. \ref{fig:mip} displays the evolution of position and states during the merging process using MIP. The dashed lines represent on-ramp CAVs, while the solid lines represent mainline CAVs. \ref{fig:mip}(a) demonstrates that MIP indeed prioritizes merging for the mainline, which has higher traffic density. As shown in Fig. \ref{fig:mip}(b-c), the MIP controller takes into account both spacing deviation and speed difference from the predecessor, allowing the initial speed difference to directly counteract the spacing deviation. Moreover, as shown in Fig. \ref{fig:mip}(d), the acceleration evolution using MIP controller exhibits relatively small peaks. In contrast, as shown in Fig. \ref{fig:fifo}(a-c), the FIFO strategy solely considers the position of the CAVs, it also results in a situation where CAV 2, 3, and 4 have to accelerate/decelerate until the speed differences evolve to the opposite direction before the spacing deviation begins to be eliminated. This results in an initial surge in the spacing deviation. Furthermore, the acceleration evolution shows relatively large peaks.

\begin{figure}[h]
    \centering
    \subfigure[]
    {\includegraphics[width=0.2\textwidth]{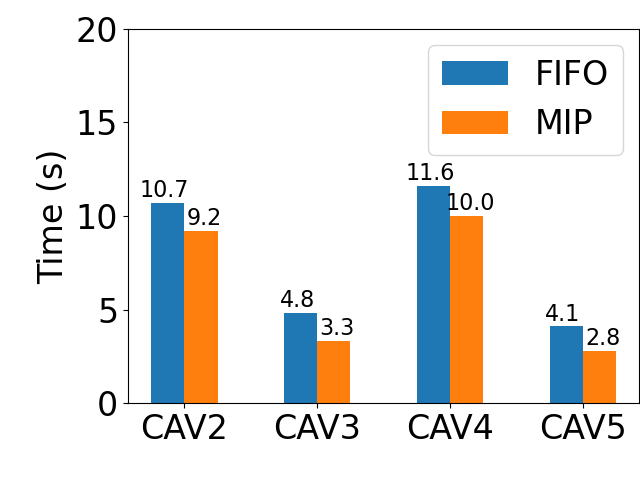}}
    \subfigure[]{\includegraphics[width=0.2\textwidth]{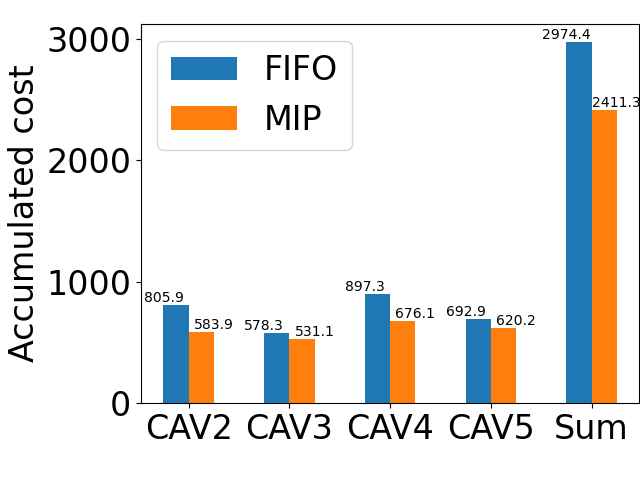}}
    \caption{Quantitative comparison of experiments in scenario 1. (a) Time spent to converge within [$-\Delta d_{safe},~{}\Delta d_{safe}\text{]}$. (b) Accumulated cost before converging.}
    \label{fig:veh_move}
\end{figure}
Quantitative comparison results of the merging process are shown in Fig. \ref{fig:veh_move}. It is observed that the MIP controller leads to less time spent for $\Delta d_i(t)$ to converge within a safe range [$-\Delta d_{safe},~{}\Delta d_{safe}$] and smaller accumulated cost before converging for the lower-level longitudinal controller, wherein the cost  at each time step is given by (\ref{stage cost}).

Both Fig.\ref{fig:mip} and Fig.\ref{fig:fifo} showcase that the states of all CAVs are regulated to zero over time, indicating local asymptotic stable characteristic of the lower-level longitudinal controller. Moreover, both Fig. \ref{fig:mip}(b) and Fig. \ref{fig:fifo}(b) demonstrate that the spacing deviation for each CAV converges to the predefined safety range $[-\Delta d_{safe}, \Delta d_{safe}]$ well before reaching the merging point, underscoring the safety of the merging process.

The computational time of the longitudinal controllers of the first experiment are summarized in Table \ref{tab:local}, detailing the upper-level controller computational time, and the average and maximum computational time of the lower-level longitudinal controller. As described in Section \ref{sec2}, the upper-level controller operates only when a new vehicle enters the control area, ensuring minimal computational demand.  Even if two vehicles enter the control area in quick succession, the system can delay sequence assignments until after the new output is calculated, as sequence assignment does not necessitate immediate real-time response. The lower-level controller, meanwhile, reports both average and maximum computational time well below the discrete time interval $T_s=0.1s$, this allows for new control inputs to be computed concurrently with the execution of the previous one. Therefore, the longitudinal controllers meet the real-time computational requirements.

\begin{table}[h]
\centering
\tiny
\setlength{\abovecaptionskip}{0pt}
\caption{Computational Time of Longitudinal Controllers in Scenario 1 Experiment}
\label{tab:local}
 
    \centering
    \begin{adjustbox}{width=0.9\linewidth}
    \begin{tabular}[t]{|>{\centering\arraybackslash}m{1cm}|>{\centering\arraybackslash}m{1.1cm}|>{\centering\arraybackslash}m{1.3cm}|}
        \hline
        \textbf{\textit{Upper}} & \textbf{\textit{Lower average}} & \textbf{\textit{Lower maximum}} \\
        
        \hline
        $0.086s$ & $0.047s$  & $0.059s$\\
        \hline
    \end{tabular}
    \end{adjustbox}
\end{table}


\subsection{String Stability Experiments\label{sec6.2}}
In this subsection, we examine the $l_2$-norm string stability of the lower-level longitudinal controller with scenario 2.

\begin{figure}[h]
    \centering
    \setlength{\abovecaptionskip}{0pt}
    \subfigure[]{\includegraphics[width=0.2\textwidth]{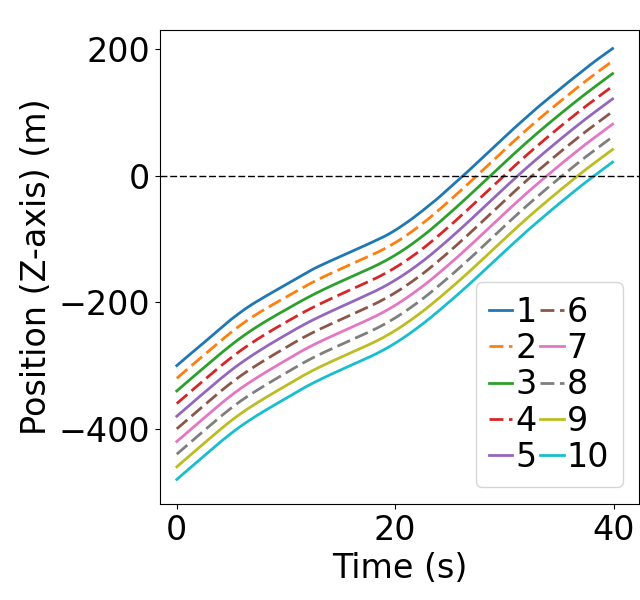}}
    \subfigure[]{\includegraphics[width=0.2\textwidth]{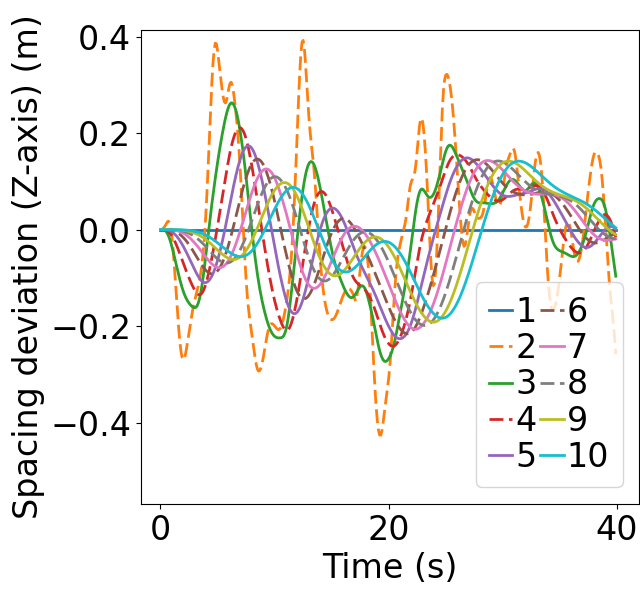}}
    \subfigure[]{\includegraphics[width=0.215\textwidth]{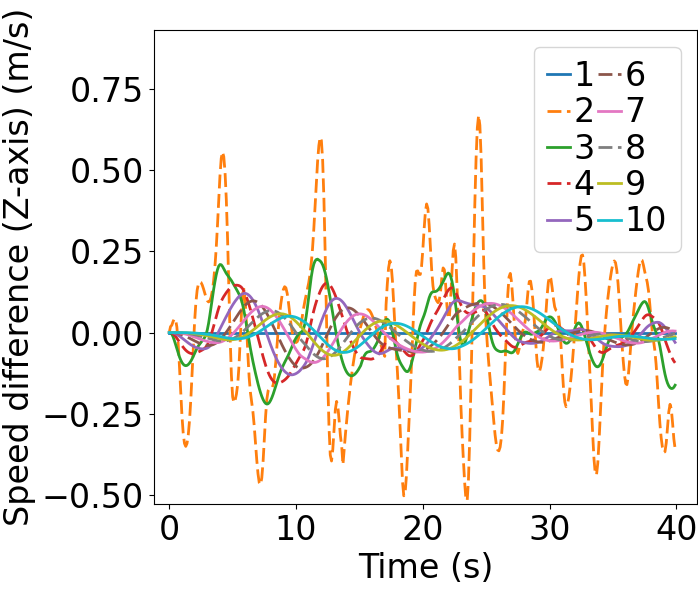}}
    \subfigure[]{\includegraphics[width=0.2\textwidth]{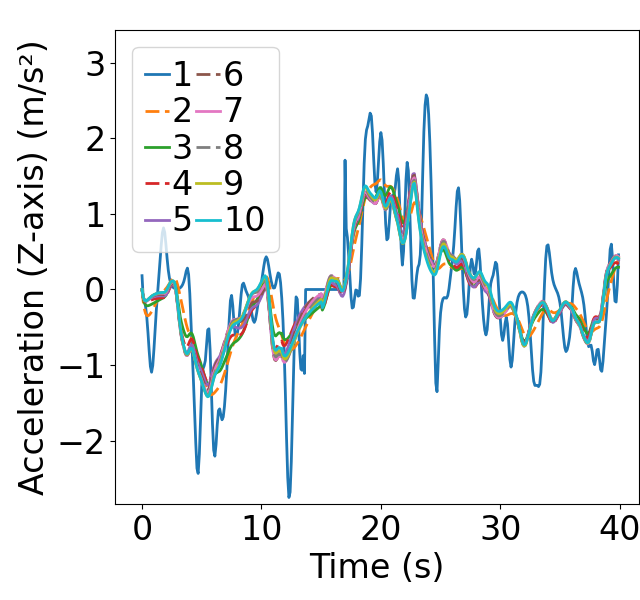}}
    \caption{CAV position and controller state evolution of the scenario 2 experiment. (a) Position evolution. (b) Spacing deviation evolution. (c) Speed difference evolution. (d) Acceleration evolution.}
    \label{fig:string_stable}
\end{figure}


It is important to clearly show the disturbance damping along the vehicular string, hence a scenario is designed with disturbances continuously applied to the leading vehicle and the impact of initial deviations eliminated. To better present the disturbance damping, a total number of 10 CAVs are generated. Note that, the number of vehicles does not impact the results since string stability is proved for each vehicle. In this setup, 6 mainline CAVs are positioned at intervals of 20 meters, complemented by 4 on-ramp CAVs spaced every 40 meters. Each CAV starts with an initial speed of 15 meters per second. Notably, CAV1's acceleration profile is derived from the real-world NGSIM dataset \cite{NGSIM}, while the remaining CAVs commence with zero initial acceleration. In this configuration, CAVs 2-10 start without any initial disturbances, but the leading CAV1 consistently introduces disturbances throughout the merging process. In this scenario, the weights are selected the same as the ones in subsection \ref{sec6.1}, the corresponding linear feedback and feedforward gains $K_b=[0.1849$, $10.5855$, $-4.9804]$, $k_f=5.8356$ satisfy the condition (\ref{string stable consition2}), indicating that the platoon should display $l_2$-norm string stability.

Fig. \ref{fig:string_stable} presents the simulation results of scenario 2.  The figure depicts the evolution of position and states of the 10 CAVs. The dashed lines represent on-ramp CAVs while the solid lines represent mainline CAVs. It is observed that the disturbances are dampened along CAV string in all three states, showcasing the $l_2$-norm string stable characteristic. Furthermore, Fig. \ref{fig:string_stable}(b) illustrates that the spacing deviation for each CAV starts within the predefined safety range $[-\Delta d_{safe}, \Delta d_{safe}]$, and stays within the range for the whole merging process, underscoring the safety of the merging process.

Similar to Table \ref{tab:local}, Table \ref{tab:string} summarizes the computational time of the longitudinal controllers. Following the same rationale in Subsection \ref{sec6.1}, it is again demonstrated that the longitudinal controllers satisfy the necessary real-time computational requirements.

\begin{table}[h]
\centering
\tiny
\setlength{\abovecaptionskip}{0pt}
\caption{Computational Time of Longitudinal Controllers in Scenario 2 Experiment}
\label{tab:string}

    \centering
    \begin{adjustbox}{width=0.9\linewidth}
    \begin{tabular}[t]{|>{\centering\arraybackslash}m{1cm}|>{\centering\arraybackslash}m{1.1cm}|>{\centering\arraybackslash}m{1.3cm}|}
        \hline
        \textbf{\textit{Upper}} & \textbf{\textit{Lower average}} & \textbf{\textit{Lower maximum}} \\
        
        \hline
        $0.101s$ & $0.045s$  & $0.061s$\\
        \hline
    \end{tabular}
    \end{adjustbox}
\end{table}

\subsection{Lower-level Lateral Controller Experiments\label{sec6.3}}

The linear MPC-based lower-level lateral controller regulates lateral and heading angle deviations, allowing the framework to handle curvy ramps and keep CAVs centered in their lanes. As the lateral controller operates locally on each CAV and is not dependent on the movement of other CAVs, we examine it on a single on-ramp CAV in a two dimensional scenario where it tracks the centerline of the ramp. Lateral and heading angle deviations primarily occur at the initial time points and at moments when disturbances (e.g., instantaneous changes in curvature) are introduced. To provide a clearer view of these critical time points, the CAV is initially positioned closer to the merging point.

\begin{figure}[h]
    \centering
    \setlength{\abovecaptionskip}{0pt}
    {\includegraphics[width=0.49\textwidth]{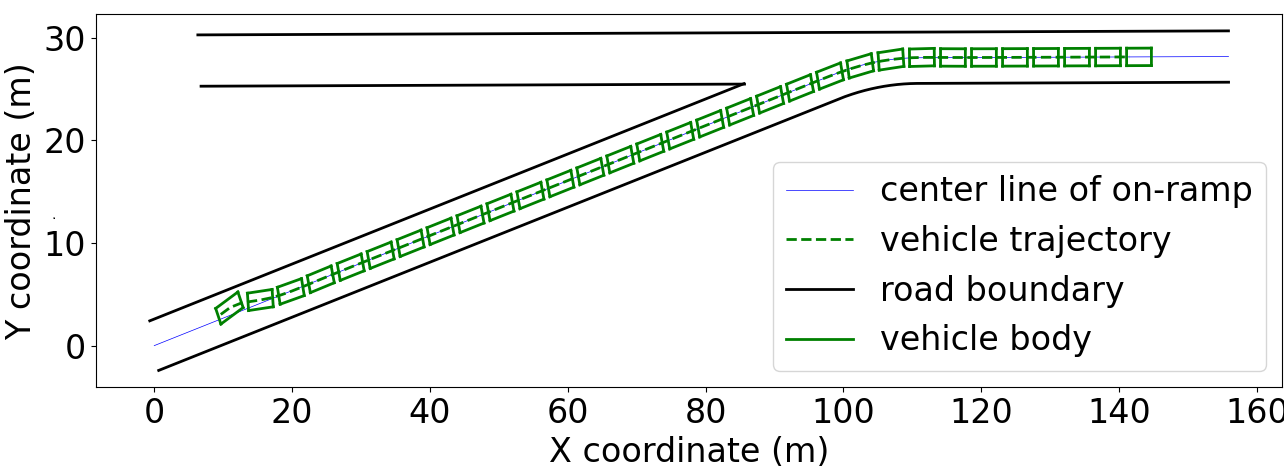}}
    \caption{Road geometry and CAV trajectory of scenario 3 experiment}
    \label{fig:lat_scenario}
\end{figure}
\begin{figure}[h]
    \centering
    \setlength{\abovecaptionskip}{0pt}
    \subfigure[]{\includegraphics[width=0.2\textwidth]{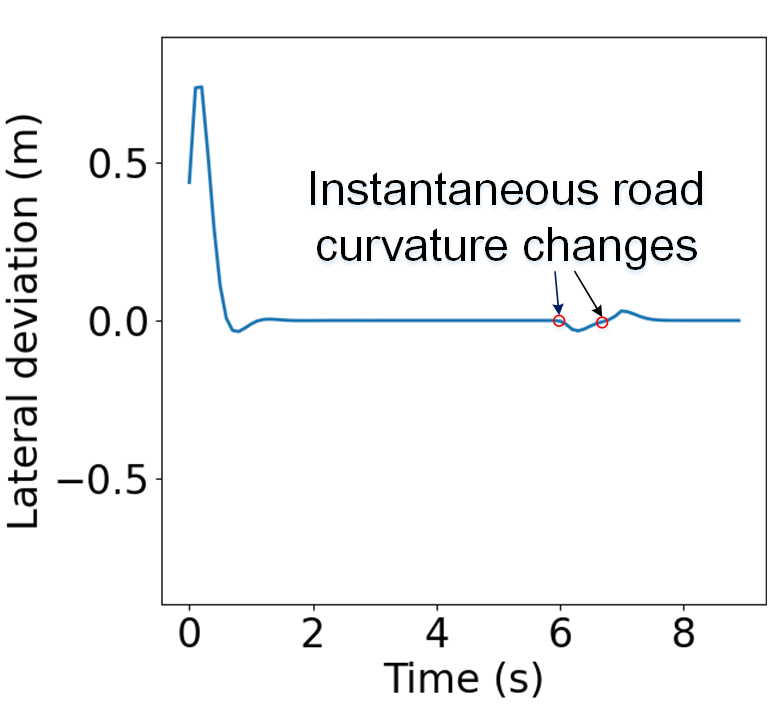}}
    \subfigure[]{\includegraphics[width=0.203\textwidth]{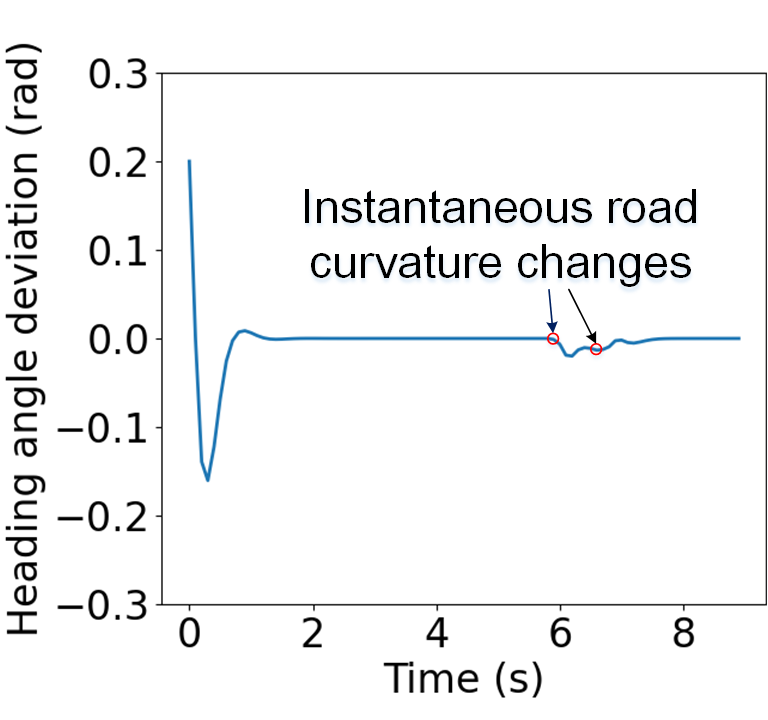}}
    \subfigure[]
    {\includegraphics[width=0.2\textwidth]{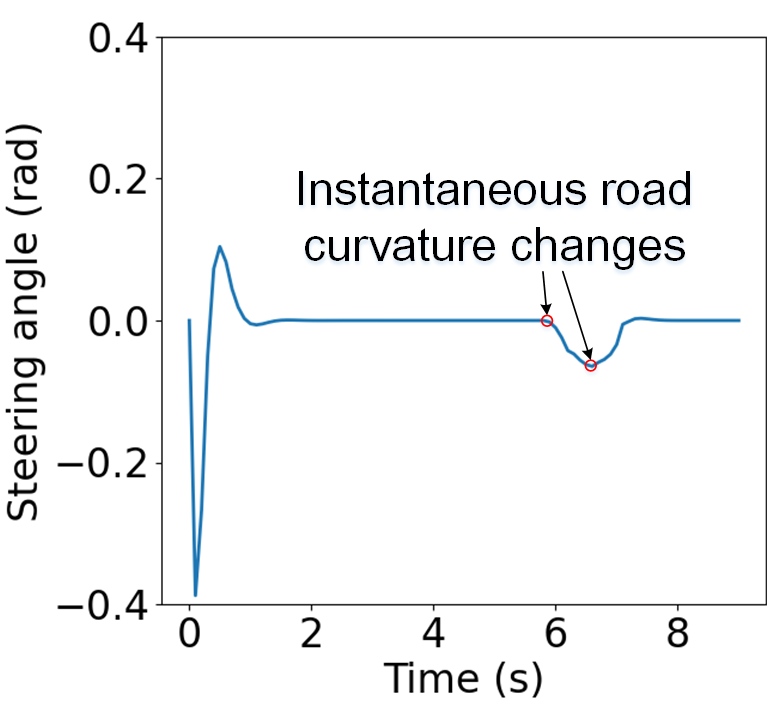}}
    \caption{Tracking deviation and control input evolution of scenario 3 experiment. (a) Lateral deviation. (b) Heading angle deviation. (c) Steering angle.}
    \label{fig:lat_result}
\end{figure}

In scenario 3, an on-ramp CAV starts at -110$m$ along the virtual Z-axis, with an initial velocity of 15$m/s$ and zero acceleration. It trails a mainline CAV positioned initially at -90$m$ on the Z-axis, also moving at 15$m/s$. The mainline CAV's acceleration profile is extracted from the NGSIM dataset, similar to Scenario 2. Post each time step of the lower-level longitudinal controller, the lower-level lateral controller is activated, leveraging the longitudinal controller's velocity prediction for enhanced CAV state forecasting. The on-ramp CAV begins with a lateral deviation and heading angle deviation of 0.42$m$ and 0.2$rad$, respectively. 

For the purpose of visualization, the trajectory of the on-ramp CAV is illustrated in Fig. \ref{fig:lat_scenario}, together with part of the road geometry. The vehicle body is plotted every 3 time steps to better visualize the CAV's movement. The evolution of the lateral deviation, heading angle deviation, and steering angle is shown in Fig. \ref{fig:lat_result}. It is observed that the initial lateral deviation and heading angle deviation are quickly regulated, and the steering angle remains relatively smooth with fast convergence. The lateral deviation, heading angle deviation, and steering angle all exhibit two disturbances during the merging process, caused by the instantaneous changes in road curvature. However, the disturbances are rapidly mitigated.

The average and maximum computational time of the lateral controller are
presented in Table \ref{tab:lateral}. It is observed that both the average and maximum computational time are well below the discrete time interval $T_s=0.1s$, facilitating concurrent computation of a new control inputs alongside the execution of the previous one. Consequently, the lateral controller meets the required real-time computational standards.

\begin{table}[h]
\centering
\tiny
\setlength{\abovecaptionskip}{0pt}
\caption{Computational Time of the Lateral Controller in Scenario 3 Experiment}
\label{tab:lateral}

    \centering
    \begin{adjustbox}{width=0.6\linewidth}
    \begin{tabular}[t]{|>{\centering\arraybackslash}m{1.1cm}|>{\centering\arraybackslash}m{1.3cm}|}
        \hline \textbf{\textit{Average}} & \textbf{\textit{Maximum}} \\
        \hline
         $0.036s$  & $0.053s$\\
        \hline
    \end{tabular}
    \end{adjustbox}
\end{table}

\section{CONCLUSIONS\label{sec7}}

This paper deals with a CAVs on-ramp merging problem with a hierarchical control framework. The control is designed with two levels. At the upper level, it employs mixed-integer linear programming with a multi-scale cost function to generate merging sequences. This takes into account both the microscopic dynamics, like relative positions and velocities of vehicle pairs, and the macroscopic traffic distribution on mainline and on-ramps. The lower level introduces a distributed MPC-based longitudinal controller, focused on minimizing spacing deviation, speed difference, acceleration, and jerk for each CAV, while incorporating a safety stage cost and terminal constraint for enhanced safety. Asymptotic local stability and $l_2$ norm string stability are proved with mathematical derivations, and the longitudinal controller’s initial feasibility is significantly enhanced. Moreover, a linear-MPC based lateral controller minimizes lateral and heading angle deviations, and steering angles of each CAV, making the framework applicable to two-dimensional scenarios.

Through multi-scenario simulations and comparisons, the proposed framework greatly improves traffic efficiency and cost-effectiveness compared to a FIFO method. The whole control framework exhibits stability characteristics at both vehicle and system levels by the fact that the lateral deviations converge rapidly, as well as ensured asymptotic local stability and string stability longitudinally. Furthermore, it is shown that the control framework satisfies real-time computational requirements. Future work involves developing a lane changing maneuver for scenarios including multiple mainline lanes.

\bibliographystyle{IEEEtran}
\bibliography{root}

\begin{thebibliography}{10}
\providecommand{\url}[1]{#1}
\csname url@rmstyle\endcsname
\providecommand{\newblock}{\relax}
\providecommand{\bibinfo}[2]{#2}
\providecommand\BIBentrySTDinterwordspacing{\spaceskip=0pt\relax}
\providecommand\BIBentryALTinterwordstretchfactor{4}
\providecommand\BIBentryALTinterwordspacing{\spaceskip=\fontdimen2\font plus
\BIBentryALTinterwordstretchfactor\fontdimen3\font minus
  \fontdimen4\font\relax}
\providecommand\BIBforeignlanguage[2]{{%
\expandafter\ifx\csname l@#1\endcsname\relax
\typeout{** WARNING: IEEEtran.bst: No hyphenation pattern has been}%
\typeout{** loaded for the language `#1'. Using the pattern for}%
\typeout{** the default language instead.}%
\else
\language=\csname l@#1\endcsname
\fi
#2}}

\bibitem{ahn2007freeway}
S.~Ahn and M.~J. Cassidy, ``Freeway traffic oscillations and vehicle
  lane-change maneuvers,'' \emph{Transportation and Traffic Theory}, vol.~1,
  pp. 691--710, 2007.

\bibitem{ahn2010}
S.~Ahn, J.~Laval, and M.~J. Cassidy, ``Effects of merging and diverging on
  freeway traffic oscillations: theory and observation,'' \emph{Transportation
  research record}, vol. 2188, no.~1, pp. 1--8, 2010.

\bibitem{lu2014review}
X.-Y. Lu and S.~E. Shladover, ``Review of variable speed limits and advisories:
  Theory, algorithms, and practice,'' \emph{Transportation research record},
  vol. 2423, no.~1, pp. 15--23, 2014.

\bibitem{papageorgiou2002freeway}
M.~Papageorgiou and A.~Kotsialos, ``Freeway ramp metering: An overview,''
  \emph{IEEE transactions on intelligent transportation systems}, vol.~3,
  no.~4, pp. 271--281, 2002.

\bibitem{wang2015cooperative}
M.~Wang, W.~Daamen, S.~P. Hoogendoorn, and B.~van Arem, ``Cooperative
  car-following control: Distributed algorithm and impact on moving jam
  features,'' \emph{IEEE Transactions on Intelligent Transportation Systems},
  vol.~17, no.~5, pp. 1459--1471, 2015.

\bibitem{li2023beyond}
S.~Li, M.~Anis, D.~Lord, H.~Zhang, Y.~Zhou, and X.~Ye, ``Beyond 1d and
  oversimplified kinematics: A generic analytical framework for surrogate
  safety measures,'' \emph{arXiv preprint arXiv:2312.07019}, 2023.

\bibitem{li2024enhancing}
Z.~Li, Y.~Zhou, Y.~Zhang, and X.~Li, ``Enhancing vehicular platoon stability in
  the presence of communication cyberattacks: A reliable longitudinal
  cooperative control strategy,'' \emph{Transportation Research Part C:
  Emerging Technologies}, vol. 163, p. 104660, 2024.

\bibitem{milanes2010automated}
V.~Milan{\'e}s, J.~Godoy, J.~Villagr{\'a}, and J.~P{\'e}rez, ``Automated
  on-ramp merging system for congested traffic situations,'' \emph{IEEE
  Transactions on Intelligent Transportation Systems}, vol.~12, no.~2, pp.
  500--508, 2010.

\bibitem{chen2021connected}
T.~Chen, M.~Wang, S.~Gong, Y.~Zhou, and B.~Ran, ``Connected and automated
  vehicle distributed control for on-ramp merging scenario: A virtual rotation
  approach,'' \emph{Transportation Research Part C: Emerging Technologies},
  vol. 133, p. 103451, 2021.

\bibitem{cao2015cooperative}
W.~Cao, M.~Mukai, T.~Kawabe, H.~Nishira, and N.~Fujiki, ``Cooperative vehicle
  path generation during merging using model predictive control with real-time
  optimization,'' \emph{Control Engineering Practice}, vol.~34, pp. 98--105,
  2015.

\bibitem{zhou2019distributed}
Y.~Zhou, M.~Wang, and S.~Ahn, ``Distributed model predictive control approach
  for cooperative car-following with guaranteed local and string stability,''
  \emph{Transportation research part B: methodological}, vol. 128, pp. 69--86,
  2019.

\bibitem{borrelli2017predictive}
F.~Borrelli, A.~Bemporad, and M.~Morari, \emph{Predictive control for linear
  and hybrid systems}.\hskip 1em plus 0.5em minus 0.4em\relax Cambridge
  University Press, 2017.

\bibitem{li2020distributed}
K.~Li, Y.~Bian, S.~E. Li, B.~Xu, and J.~Wang, ``Distributed model predictive
  control of multi-vehicle systems with switching communication topologies,''
  \emph{Transportation Research Part C: Emerging Technologies}, vol. 118, p.
  102717, 2020.

\bibitem{dunbar2011distributed}
W.~B. Dunbar and D.~S. Caveney, ``Distributed receding horizon control of
  vehicle platoons: Stability and string stability,'' \emph{IEEE Transactions
  on Automatic Control}, vol.~57, no.~3, pp. 620--633, 2011.

\bibitem{ding2019rule}
J.~Ding, L.~Li, H.~Peng, and Y.~Zhang, ``A rule-based cooperative merging
  strategy for connected and automated vehicles,'' \emph{IEEE Transactions on
  Intelligent Transportation Systems}, vol.~21, no.~8, pp. 3436--3446, 2019.

\bibitem{hu2021embedding}
Z.~Hu, J.~Huang, Z.~Yang, and Z.~Zhong, ``Embedding robust constraint-following
  control in cooperative on-ramp merging,'' \emph{IEEE Transactions on
  Vehicular Technology}, vol.~70, no.~1, pp. 133--145, 2021.

\bibitem{chen2020hierarchical}
N.~Chen, B.~van Arem, T.~Alkim, and M.~Wang, ``A hierarchical model-based
  optimization control approach for cooperative merging by connected automated
  vehicles,'' \emph{IEEE Transactions on Intelligent Transportation Systems},
  vol.~22, no.~12, pp. 7712--7725, 2020.

\bibitem{rios2016automated}
J.~Rios-Torres and A.~A. Malikopoulos, ``Automated and cooperative vehicle
  merging at highway on-ramps,'' \emph{IEEE Transactions on Intelligent
  Transportation Systems}, vol.~18, no.~4, pp. 780--789, 2016.

\bibitem{mu2021event}
C.~Mu, L.~Du, and X.~Zhao, ``Event triggered rolling horizon based systematical
  trajectory planning for merging platoons at mainline-ramp intersection,''
  \emph{Transportation research part C: emerging technologies}, vol. 125, p.
  103006, 2021.

\bibitem{zhang2016model}
R.~Zhang, F.~Rossi, and M.~Pavone, ``Model predictive control of autonomous
  mobility-on-demand systems,'' in \emph{2016 IEEE International Conference on
  Robotics and Automation (ICRA)}.\hskip 1em plus 0.5em minus 0.4em\relax IEEE,
  2016, pp. 1382--1389.

\bibitem{xu2021energy}
B.~Xu, J.~Shi, S.~Li, H.~Li, and Z.~Wang, ``Energy consumption and battery
  aging minimization using a q-learning strategy for a battery/ultracapacitor
  electric vehicle,'' \emph{Energy}, vol. 229, p. 120705, 2021.

\bibitem{li2023sequencing}
S.~Li, Y.~Zhou, X.~Ye, J.~Jiang, and M.~Wang, ``Sequencing-enabled hierarchical
  cooperative on-ramp merging control for connected and automated vehicles,''
  in \emph{2023 IEEE 26th International Conference on Intelligent
  Transportation Systems (ITSC)}.\hskip 1em plus 0.5em minus 0.4em\relax IEEE,
  2023, pp. 5146--5153.

\bibitem{wang2014rolling}
M.~Wang, W.~Daamen, S.~P. Hoogendoorn, and B.~van Arem, ``Rolling horizon
  control framework for driver assistance systems. part i: Mathematical
  formulation and non-cooperative systems,'' \emph{Transportation research part
  C: emerging technologies}, vol.~40, pp. 271--289, 2014.

\bibitem{willems1997introduction}
J.~C. Willems and J.~W. Polderman, \emph{Introduction to mathematical systems
  theory: a behavioral approach}.\hskip 1em plus 0.5em minus 0.4em\relax
  Springer Science \& Business Media, 1997, vol.~26.

\bibitem{amrit2011economic}
R.~Amrit, J.~B. Rawlings, and D.~Angeli, ``Economic optimization using model
  predictive control with a terminal cost,'' \emph{Annual Reviews in Control},
  vol.~35, no.~2, pp. 178--186, 2011.

\bibitem{ferramosca2010economic}
A.~Ferramosca, J.~B. Rawlings, D.~Lim{\'o}n, and E.~F. Camacho, ``Economic mpc
  for a changing economic criterion,'' in \emph{49th IEEE Conference on
  Decision and Control (CDC)}.\hskip 1em plus 0.5em minus 0.4em\relax IEEE,
  2010, pp. 6131--6136.

\bibitem{rawlings2012fundamentals}
J.~B. Rawlings, D.~Angeli, and C.~N. Bates, ``Fundamentals of economic model
  predictive control,'' in \emph{2012 IEEE 51st IEEE conference on decision and
  control (CDC)}.\hskip 1em plus 0.5em minus 0.4em\relax IEEE, 2012, pp.
  3851--3861.

\bibitem{khalil2002nonlinear}
H.~Khalil, \emph{Nonlinear Systems}, ser. Pearson Education.\hskip 1em plus
  0.5em minus 0.4em\relax Prentice Hall, 2002.

\bibitem{fagiano2013generalized}
L.~Fagiano and A.~R. Teel, ``Generalized terminal state constraint for model
  predictive control,'' \emph{Automatica}, vol.~49, no.~9, pp. 2622--2631,
  2013.

\bibitem{naus2010string}
G.~J. Naus, R.~P. Vugts, J.~Ploeg, M.~J. van De~Molengraft, and M.~Steinbuch,
  ``String-stable cacc design and experimental validation: A frequency-domain
  approach,'' \emph{IEEE Transactions on vehicular technology}, vol.~59, no.~9,
  pp. 4268--4279, 2010.

\bibitem{kleinman1977continuous}
D.~L. Kleinman and P.~K. Rao, ``Continuous-discrete gain transformation methods
  for linear feedback control,'' \emph{Automatica}, vol.~13, no.~4, pp.
  425--428, 1977.

\bibitem{melzer1971sampling}
S.~M. Melzer and B.~C. Kuo, ``Sampling period sensitivity of the optimal
  sampled data linear regulator,'' \emph{Automatica}, vol.~7, no.~3, pp.
  367--370, 1971.

\bibitem{10.1112/plms/s2-45.1.458}
P.~MacAulay-Owen, ``{Parseval's Theorem for Hankel Transforms},''
  \emph{Proceedings of the London Mathematical Society}, vol. s2-45, no.~1, pp.
  458--474, 01 1939.

\bibitem{halperin2002robust}
D.~Halperin, ``Robust geometric computing in motion,'' \emph{The International
  Journal of Robotics Research}, vol.~21, no.~3, pp. 219--232, 2002.

\bibitem{NGSIM}
J.~Colyar and J.~Halkias, ``Us highway 101 dataset,'' Federal Highway
  Administration (FHWA), Tech. Rep. FHWA-HRT-07-030, 2007.

\end{thebibliography}
\begin{IEEEbiography}[{\includegraphics
[width=1in,height=1.25in,clip,
keepaspectratio]{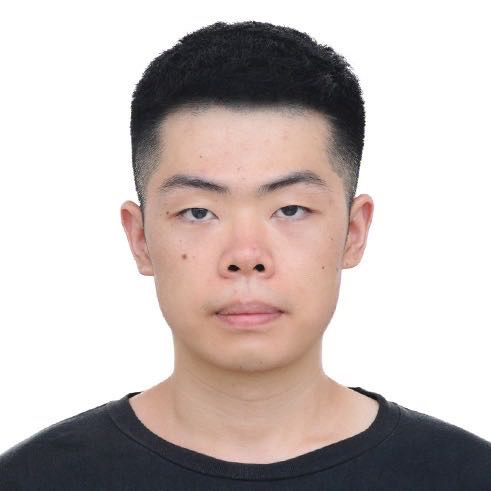}}]
{Sixu Li}
received the B.S. degree in Engineering
Mechanics from Hunan University, Changsha, China, in
2020, and the master's degree in
Mechanical Engineering from UC Berkeley, Berkeley, CA, USA, in 2022. He is currently pursuing the Ph.D. degree in Interdisciplinary Engineering with Texas A\&M University, College Station, TX, USA. 

His current research interests include dynamics and
control, MPC, optimization, and reinforcement
learning for autonomous driving, intelligent transportation systems, and robotics.
\end{IEEEbiography}

\begin{IEEEbiography}[{\includegraphics
[width=1in,height=1.25in,clip,
keepaspectratio]{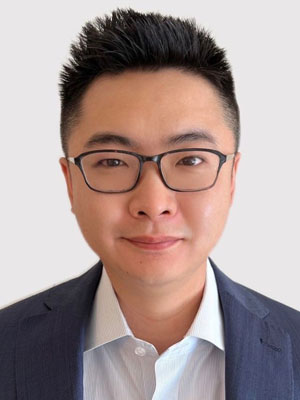}}]
{Yang Zhou}
(Member, IEEE) received the master’s
degree in Civil and Environmental Engineering from
the University of Illinois at Urbana-Champaign,
Champaign, IL, USA, in 2015, and the Ph.D. degree
in Civil and Environmental Engineering from the University of Wisconsin–Madison, WI, USA, in 2019. He worked as a PostDoc Researcher (2019-2022) at the Department of Civil and Environmental Engineering, University of Wisconsin-Madison
Madison. He is currently an Assistant Professor at the Zachry Department of Civil \& Environmental Engineering, Texas A\&M University, College Station, TX, USA. 

His main research directions are connected automated vehicles robust control, interconnected system stability analysis, traffic big data analysis, and microscopic traffic flow modeling.
\end{IEEEbiography}

\begin{IEEEbiography}[{\includegraphics
[width=1in,height=1.25in,clip,
keepaspectratio]{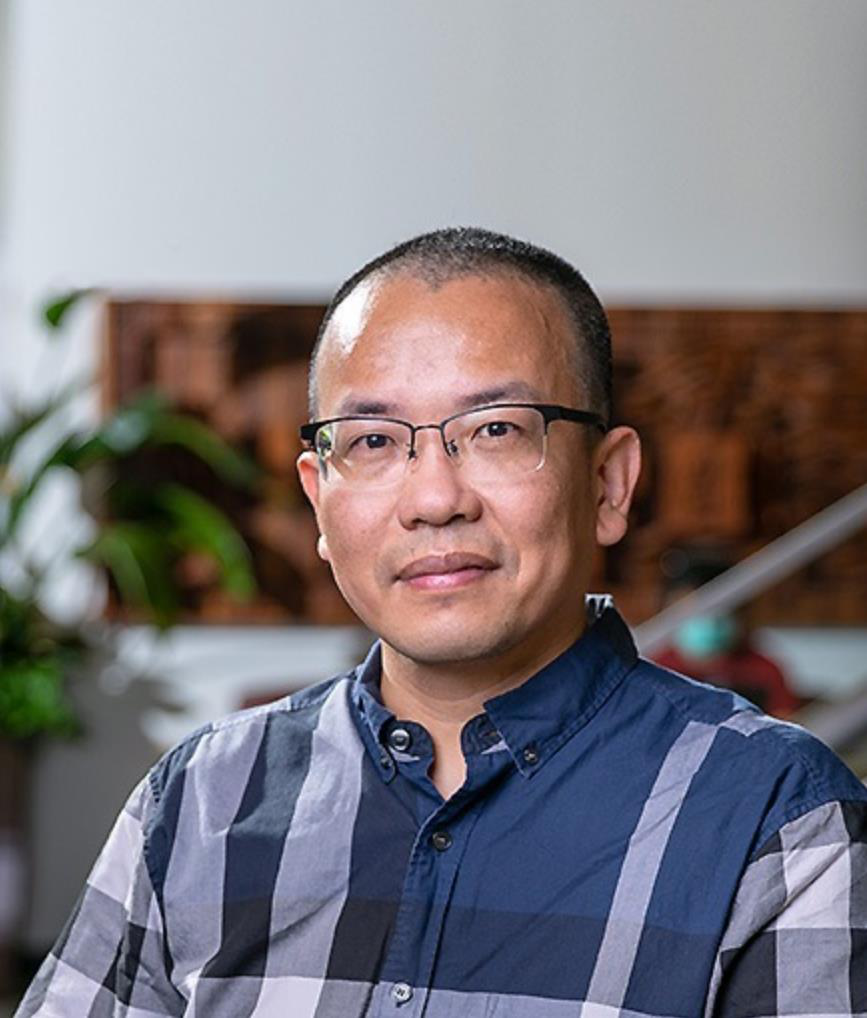}}]
{Xinyue Ye}
received a PhD degree in Geography from University of California at Santa Barbara 2010. He worked as an Assistant Professor (2010-2013) at Bowling Green State University; Assistant/Associate Professor (2013-2018) at Kent State University; Associate Professor (2018-2020) at New Jersey Institute of Technology. He is now Full Professor at the Department of Landscape Architecture \& Urban Planning at Texas A\&M University and holds Harold L. Adams Endowed Professorship with joint appointments across multiple colleges. He directs the GEOSAT Center, established by the Texas A\&M Board of Regents. 

His main research interests are human dynamics and urban artificial intelligence, particularly in the context of convergence research. He is an Editor of Computational Urban Science.
\end{IEEEbiography}

\begin{IEEEbiography}[{\includegraphics
[width=1in,height=1.25in,clip,
keepaspectratio]{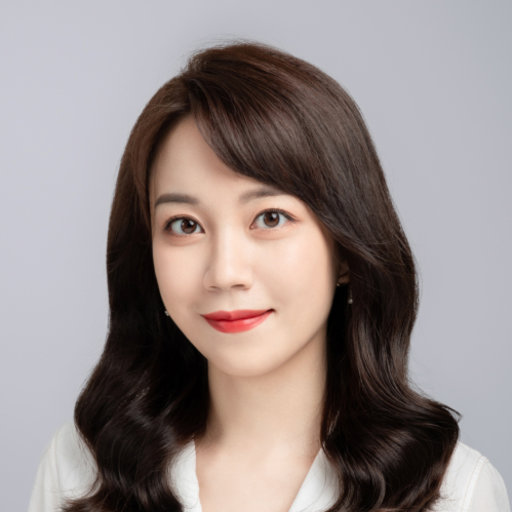}}]
{Jiwan Jiang}
received the B.S. degree from Southwest Jiaotong University, Chengdu, China, in 2018, and the master’s degree from the School of Transportation, Southeast University, Nanjing, China, in 2021. She is currently pursuing the Ph.D. degree
in Civil and Environmental Engineering with the University of Wisconsin–Madison, WI, USA.

Her main research interests focus on connected automated vehicle control and operation, harnessing stochastic methods for car-following model calibration, and parallel computing.
\end{IEEEbiography}

\begin{IEEEbiography}[{\includegraphics
[width=1in,height=1.25in,clip,
keepaspectratio]{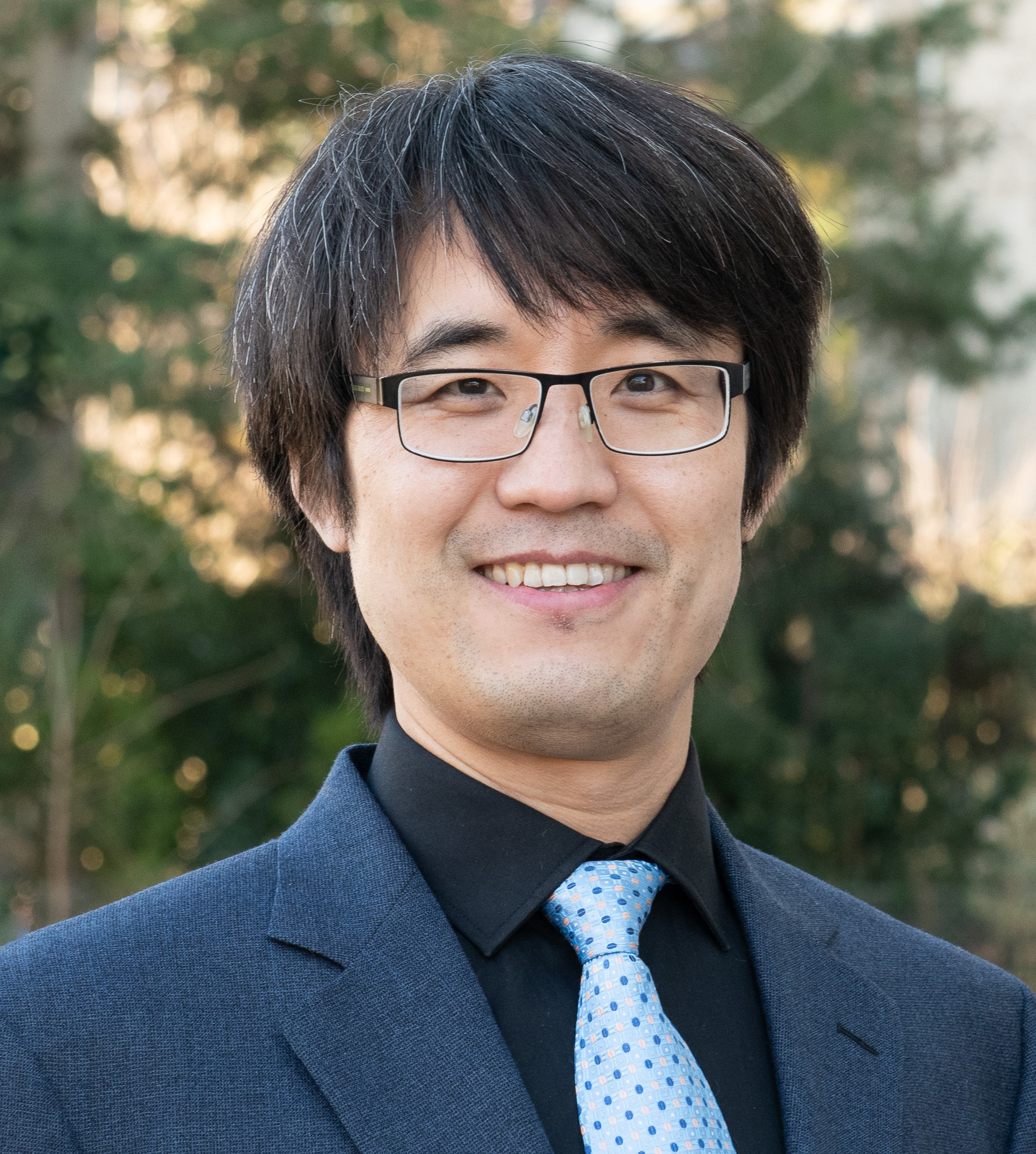}}]
{Meng Wang}
(Member, IEEE) received an M.Sc. degree from Research Institute of Highway and a PhD degree (Hons.) from TU Delft, in 2006 and 2014, respectively. He worked as a PostDoc Researcher (2014-2015) at the Faculty of Mechanical Engineering, TU Delft, and as an Assistant Professor (2015-2021, tenured since 2019) with the Department of Transport and Planning. Since 2021, he has been a Full Professor and Head of the Chair of Traffic Process Automation, “Friedrich List” Faculty of Transport and Traffic Sciences, TU Dresden. 

His main research interests are control design and impact assessment of Cooperative Intelligent Transportation Systems. He is an Associate Editor of IEEE Transactions on Intelligent Transportation Systems and Transportmetrica B.
\end{IEEEbiography}

\end{document}